\documentclass[11pt]{article}

\usepackage{natbib}
 \bibpunct[, ]{(}{)}{,}{a}{}{,}%
 %
 %
 %
 %
 %


\usepackage{epsfig,amsfonts,latexsym,amsmath,tikz,fullpage}

\newtheorem{theorem}{Theorem}
\newtheorem{corollary}{Corollary}
\newtheorem{lemma}{Lemma}

\newtheorem{conjecture}{Conjecture}

\newenvironment{myproof}{\noindent\textbf{Proof\ }}{\hspace*{\fill}$\Box$\medskip}

\begin{document}

\title{The local-global conjecture for scheduling with non-linear cost}

\author{
Nikhil Bansal%
\thanks{Department of Mathematics and Computer Science, Eindhoven University of Technology, 5600 MB Eindhoven, The Netherlands \textsl{n.bansal@tue.nl}}
\and
Christoph D\"urr%
\thanks{Sorbonne Universit\'es, UPMC Univ Paris 06, CNRS, LIP6, 75005 Paris, France \textsl{christoph.durr@lip6.fr}}
\and
Nguyen Kim Thang%
\thanks{IBISC, University Evry Val d'Essonne, 23 blv. de France, 91034 Evry, France \textsl{thang@ibisc.fr}}
\and
\'Oscar C. V\'asquez%
\thanks{Universidad de Santiago, Departamento de Ingenier\'ia Industrial, 3769 Av. Ecuador, Santiago, Chile \textsl{oscar.vasquez@usach.cl}}
} 

\maketitle

\begin{abstract}
  We consider the classical scheduling problem on a single machine, on which we need to
  schedule sequentially $n$ given jobs.  Every job $j$ has a processing time $p_j$ and a priority weight $w_j$, and for a given schedule a completion time $C_j$.  In this paper we consider the problem of minimizing the objective value $\sum_j w_j C_j^\beta$ for some fixed constant $\beta>0$.
  This non-linearity is motivated for example by the learning effect of a machine improving its efficiency over time, or by the speed scaling model.
	For $\beta=1$, the well-known Smith's rule that orders job in the non-increasing  order  of $w_j/p_j$ give the optimum schedule. However, for $\beta \neq 1$,  the complexity status
of  this problem  is  open. Among other things, a key issue here is that the ordering between a pair of jobs is not well-defined, and might depend on where the jobs lie in the schedule and also on the jobs between them.
We investigate this question systematically and substantially generalize the previously known results in this direction. These results lead to interesting new dominance properties among schedules which lead to huge speed up in exact algorithms for the problem. An experimental study evaluates the  impact of these properties on
the exact algorithm A*.


\end{abstract}

\maketitle

\section{Introduction}

In a typical scheduling problem we have to order $n$ given jobs, each with a different processing time, so to minimize some problem specific cost function.  Every job $j$ has a positive processing time $p_j$ and a priority weight $w_j$.  A schedule is defined by a ranking $\sigma$, and the completion time of job $j$ is defined as $C_j := \sum_i p_i$, where the sum ranges over all jobs $i$ such that $\sigma_i\leq \sigma_j$. The goal is to produce a schedule that minimizes some cost function involving the jobs' weights and the completion times.

A popular objective function is the weighted average completion time $\sum w_jC_j$ (omitting the normalization factor $1/n$).  It has been known since the 1950's that optimal schedules are precisely the orders following an decreasing \emph{Smith-ratio} $w_j/p_j$, as has been shown by a simple exchange argument \citep{Smith:56:Linear-penalty-function}.

In this paper we consider the more general objective function $\sum w_j C_j^\beta$ for some constant $\beta>0$, and denote the problem by $1||\sum w_j C_j^\beta$.  Several motivations have been given in the literature for this objective.  For example it can model the standard objective $\sum w_jC_j$, but on a machine changing its execution speed continuously.  This could result from a learning effect, or the continuous upgrade of its resources, or from a wear and tear effect, resulting in a machine which works less effective over time.  A recent motivation comes from the speed scaling scheduling model.
In
\citet{DurrJezVasquez:14:Optimization-Mechanism},
and \citet{MegowVerschae:13:PTAS-piecewise-penalty-function} the problem of minimizing total weighted completion time plus total energy consumption was studied, and both papers reduced this problem to the problem considered in this paper for a constant $1/2\leq\beta\leq 2/3$.  However as we mention later in the paper, most previous research focused on the $\beta=2$ case, as the objective function then represents a trade off between maximum and average weighted completion time.

\section{Dominance properties}

The complexity status of the problem $1||\sum w_j C_j^\beta$ is open for $\beta\neq1$ in the sense that neither  polynomial time algorithms nor NP-hardness proofs are known.  For $\beta=1$ the problem is polynomial, as has been shown by a simple exchange argument. When $i,j$ are adjacent jobs in a schedule, then the order $ij$ is preferred over $ji$ whenever $w_i/p_i > w_j/p_j$
and that is independent from all other jobs.  In this case we denote this property by $i\prec_\ell j$.  Assume for simplicity that all jobs $k$ have a distinct ratio $w_k/p_k$, which is called the \emph{Smith-ratio}.  Under this condition $\prec_\ell$ defines a total order on the jobs, that leads to the unique optimal schedule.

For general $\beta$ values, the situation is not so simple, as in term of objective cost the effect of exchanging two adjacent jobs depends on their position in the schedule.  So for two jobs $i,j$ none of $i\prec_\ell j, j\prec_\ell i$ might hold, which is precisely the difficulty of this scheduling problem.

However it would be much more useful if for some jobs $i,j$ we know that an optimal schedule always schedules $i$ before $j$, no matter if they are adjacent or not.  This property is denoted by $i\prec_g j$. Having many pairs of jobs with such a property could dramatically help  in improving exhaustive search procedures to find an exact schedule.  Section~\ref{sec:experimental} contains an experimental study on the impact of this information on the performance of some search procedure.

Therefore several attempts have been proposed to characterize the property $i\prec_g j$ as function of the job parameters $p_i,w_i,p_j,w_j$ and of $\beta$.  Several sufficient conditions have been proposed, however they are either far from what is necessary, or are tight only in some very restricted cases~:

\begin{description}
    \item[\textbf{Sen-Dileepan-Ruparel \citep{SenDileepan:90:Order-constraint-generalized-quadratic-penalty-function}}] for any $\beta>0$, if $w_i>w_j$ and $p_i\leq p_j$, then $i\prec_g j$.
    \item[\textbf{Mondal-Sen-Höhn-Jacobs-1 \citep{HohnJacobs:12:Experimental-analytical-quadratic-penalty-function}}] for $\beta=2$, if $w_i/p_i > \beta w_j/p_j$,  then $i\prec_g j$.
    \item[\textbf{Mondal-Sen-Höhn-Jacobs-2 \citep{HohnJacobs:12:Experimental-analytical-quadratic-penalty-function}}] for $\beta=2$, if $w_i\geq w_j$ and $w_i/p_i > w_j/p_j$,  then $i\prec_g j$.
\end{description}




\subsection*{Related work}
\label{sec:related-work}

Embarrassingly, very little is known about the computational complexity
of  this  problem,  except  for  the  special  case  $\beta=1$ which was solved in the 1950's
\citep{Smith:56:Linear-penalty-function}.  In that case
scheduling jobs in order of decreasing Smith ratio $w_j/p_j$ leads to
the  optimal   schedule.

Two research  directions were  applied to this  problem, approximation
algorithms and branch  and bound algorithms.  The former
have been  proposed for the  even more general problem  $1||\sum f_j(C_j)$,
where every job $j$ is  given an increasing penalty function $f_j(t)$,
that does not need to be  of the form $w_j t^\beta$.
A constant factor approximation algorithm has been proposed by
  \citet{BansalPruhs:10:Approach-geometry-scheduling-penalty-function-of-jobs}  based on  a  geometric interpretation of the problem.
The approximation factor  has been  improved  from $16$  to  $2  +\epsilon$ via  a
primal-dual       approach       by       \citet{CheungShmoys:11:Approach-primal-dual-scheduling-penaty-function-of-jobs}.   The  simpler problem  $1||\sum   w_j  f(C_j)$ was considered in  \citet{EpsteinLevin:10:Approach-adversary-scheduling-weighted-penalty-function-of-jobsUniversal-sequencing}, who   provided  a
$4+\epsilon$ approximation  algorithm for the setting where  $f$ is an
arbitrary  increasing differentiable  penalty function  chosen  by the
adversary \emph{after}  the schedule  has been produced.  A polynomial
time   approximation   scheme  has   been   provided  by \citet{MegowVerschae:13:PTAS-piecewise-penalty-function}  for the problem $1||\sum   w_j  f(C_j)$ , where $f$ is an arbitrary monotone penalty function.

Finally, \citet{HohnJacobs:12:Performance-Smith-rule} derived  a method to
compute the tight approximation factor of the Smith-ratio-schedule for
any particular  monotone increasing  convex or concave  cost function.
In particular  for $f(t)=t^\beta$ they obtained for  example the ratio
$1.07$ when $\beta=0.5$ and the ratio $1.31$ when $\beta=2$.

Concerning branch-and-bound algorithms, several papers give sufficient conditions for the global order property, and analyze experimentally the impact on branch and bound algorithms of their contributions.
Previous research  focused mainly  on the quadratic  case $\beta=2$, see
\citet{Townsend:78:Order-constraint-quadratic-penalty-function,BaggaKarlra:80:Order-constraint-quadratic-penalty-function,SenDileepan:90:Order-constraint-generalized-quadratic-penalty-function,Alidaee:93:Order-constraint-for-non-linear-penalty-function,CroceTadeiBarracoDiTullio:93:Order-constraint-generalized-quadratic-penalty-function,Szwarc:98:Decomposition-for-non-linear-penalty-function}.
\citet{MondalSen:00:Conjecture-order-constraint-quadratic-penalty-function}
conjectured that $\beta=2 \wedge (w_i\geq w_j) \wedge (w_i/p_i > w_j/p_j)$
implies  the   global  order  property  $i\prec_g j$,  and  provided
experimental evidence that this property would significantly improve
the runtime of a branch-and-prune search.  Recently, \citet{HohnJacobs:12:Experimental-analytical-quadratic-penalty-function} succeeded to prove this  conjecture.  In  addition they  provided a weaker sufficient condition for $i\prec_g j$ which holds for any integer $\beta\geq 2$.
An extensive experimental study analyzed the effect of these results on the performance of the branch-and-prune search.


\section{Our contribution}

All previously proposed sufficient conditions for ensuring that $i\prec_g j$ were rather ad-hoc, and are much stronger than what seems to be necessary.
So this motivated our main goal of obtaining a precise characterization of $i\prec_g j$, for each value $\beta>0$.

In contrast the condition $i\prec_\ell j$ is fairly easy to characterize, using simple algebra, as has been described in the past by \citet{HohnJacobs:12:Experimental-analytical-quadratic-penalty-function} for $\beta=2$.  This characterization holds in fact for any value of $\beta$ and for completeness we describe it in Section~\ref{sec:local}.

As $i\prec_g j$ trivially implies $i\prec_\ell j$, the strongest (best) possible result one could hope for is that  $i \prec_g j$ occurs precisely when $i\prec_\ell j$.
If true, this would give to a local exchange property a broader impact on the structure of optimal schedules, and have a strong implication on the effect of non-local exchanges.

Having observed the optimal solutions of a large set of instances, this property seems to be the right candidate for a characterization. Moreover, this was also suggested by previous results for particular cases.  For example  \citet{HohnJacobs:12:Experimental-analytical-quadratic-penalty-function}  showed that if $\beta=2$ and $p_i \leq p_j$ then $i\prec_g j$ if and only if $i\prec_\ell j$.
The same characterization has been shown for a related objective function, where one wants to maximize $\sum w_j C_j^{-1}$ \citep{Vasquez:14:For-the-airplane-refueling}.

This situation motivates us to state the following conjecture.
\begin{conjecture}\label{conjecture}[ {\bf Local-Global Conjecture}]
  For any $\beta>0$ and all jobs $i,j$, $i\prec_g j$ if and only if $i\prec_\ell j$.
\end{conjecture}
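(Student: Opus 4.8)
The plan is to prove the two implications separately. The direction $i\prec_g j \Rightarrow i\prec_\ell j$ is the routine one and can be seen by contraposition: if $i\prec_\ell j$ fails then, by the characterization of Section~\ref{sec:local}, there is a start time $t^{*}\ge 0$ at which the adjacent block $ji$ costs no more than $ij$. Placing a single heavy ``anchor'' job of processing time $t^{*}$ and weight $W$ in front of $i$ and $j$ and letting $W\to\infty$ forces the anchor into the first position of every optimal schedule, so that the remaining decision is exactly the two-job comparison at start time $t^{*}$; hence some optimal schedule puts $j$ before $i$ and $i\prec_g j$ fails. This construction also clarifies what the conjecture really asserts: among all instances containing $i$ and $j$, the worst case for the ``$i$ before $j$'' decision is conjectured to be the one in which every other job sits in a single block \emph{before} the pair, which is precisely the situation $\prec_\ell$ already captures.

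For the substantial direction $i\prec_\ell j \Rightarrow i\prec_g j$ I would argue by a minimal-counterexample/exchange argument. Assume $i\prec_\ell j$ and suppose some optimal schedule orders $j$ before $i$; write it as $(A, j, K, i, B)$ with an intermediate block $K=(k_1,\dots,k_m)$. If $K$ is empty the two jobs are adjacent and $i\prec_\ell j$ produces a strictly cheaper schedule, contradicting optimality, so the goal is to shrink $K$ without ever increasing the cost until the adjacent case is reached. Because $\prec_\ell$ is by definition a statement about \emph{all} start times, any beneficial swap we identify stays beneficial after the surrounding jobs have been rearranged; the natural engine is therefore a bubble sort that repeatedly either swaps the pair $(j,k_1)$ when $k_1\prec_\ell j$, or swaps $(k_m,i)$ when $i\prec_\ell k_m$, each step decreasing the length of $K$ by one.

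The crux, and the reason the statement is posed as a conjecture rather than a theorem, is to guarantee that at least one of these two boundary swaps is always available. The clean sufficient condition would be a \emph{betweenness} property: for every third job $k$, $i\prec_\ell j$ implies $i\prec_\ell k$ or $k\prec_\ell j$. For $\beta=1$ this is immediate because $\prec_\ell$ coincides with the total Smith order, but for general $\beta$ the relation $\prec_\ell$ is only a partial order whose comparabilities depend on position, so two jobs may be incomparable and betweenness can fail even when the Smith ratios interleave. Establishing betweenness, or a weaker block-boundary version of it sufficient to keep the bubble sort making progress, is the main obstacle. I expect it to reduce, via the reformulation of the local characterization that can be written as
\[
\frac{w_i}{w_j}\ \ge\ \sup_{t\ge 0}\frac{(t+p_i+p_j)^\beta-(t+p_j)^\beta}{(t+p_i+p_j)^\beta-(t+p_i)^\beta},
\]
to a delicate inequality between such suprema whose monotonicity behaviour changes across the regimes $\beta<1$, $\beta=1$ and $\beta>1$.

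Should betweenness turn out to be false, the fallback plan is to replace single swaps by a global move: reinsert $i$ immediately after $j$, or $j$ immediately before $i$, over the whole block $K$ at once, and show that the better of these two reinsertions, followed by the final adjacent swap, already strictly decreases the cost. Controlling the sign of that combined change, which couples the parameters of $K$ with $p_i,w_i,p_j,w_j$ and $\beta$, is exactly where the difficulty concentrates, and matches the fact that only partial regimes (for example $\beta=2$ with $p_i\le p_j$) are currently understood.
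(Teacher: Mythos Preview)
The statement is a \emph{conjecture}; the paper does not prove it in full.  What the paper establishes is the implication $i\prec_\ell j\Rightarrow i\prec_g j$ in two overlapping regimes: for all $\beta>0$ when $p_i\le p_j$ (Theorem~\ref{thm:lg-big}), and for all $p_i,p_j$ when $\beta\ge 1$ (Theorem~\ref{thm:lg-nikhil}).  The remaining case $0<\beta<1$ with $p_j<p_i$ is left open, with only the weaker sufficient condition of Theorem~\ref{thm:lg-oscar}.  Your easy direction $i\prec_g j\Rightarrow i\prec_\ell j$ is fine and matches what the paper treats as immediate.

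Your primary approach to the hard direction, however, is structurally blocked before the first step.  If $(A,j,K,i,B)$ is \emph{optimal}, then every adjacent pair in it is already locally optimal at its own start time; in particular the pair $(j,k_1)$ cannot be strictly improved by swapping, so $k_1\prec_\ell j$ (which would force a strict improvement at \emph{every} start time, including this one) is impossible.  The same argument rules out $i\prec_\ell k_m$.  Hence neither boundary swap of your bubble sort is ever available in an optimal schedule, and ``betweenness'' is not merely a technical obstacle but the wrong engine: the information distinguishing a good reordering from a bad one is genuinely non-local here.  The paper's $3$-job example with $p_i=4,w_i=1,p_j=8,w_j=1.5,p_k=1$ already illustrates that a single global exchange of $i$ and $j$ can fail while the two-step move through $AijK$ succeeds only because of the gain on the intermediate job.

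Your fallback---comparing $AjKi$ directly with a globally reinserted schedule---is exactly the route the paper takes, and it does not stop at an outline.  Theorem~\ref{thm:lg-big} shows that when $p_i\le p_j$ at least one of $AijK$ and $AiKj$ strictly beats $AjKi$, via a short argument that multiplies the cost differences by $\min_t\phi_{ji}(t)>1$.  Theorem~\ref{thm:lg-nikhil} handles $p_j<p_i$, $\beta\ge 1$ by first inducting on $|K|$ (if $j$ can be pushed past some prefix of $K$ without increasing cost, induction finishes; otherwise one extracts the family of inequalities~\eqref{cond2}) and then proving $F(AiKj)<F(AjKi)$ through a telescoping-sum argument that rests on the log-concavity of $f'$ and the monotonicity of $x\,f'(b+x)/(f(b+x)-f(b))$ from Lemmas~\ref{lem:log-concave}--\ref{inc}.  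So in the regimes where the conjecture is known, the ``difficulty that concentrates'' is resolved by specific analytic properties of $t\mapsto t^\beta$, not by any order-theoretic shortcut; and it is precisely the failure of the analogous inequalities for $0<\beta<1$, $p_j<p_i$ that leaves the conjecture open.
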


We succeed to show this claim in the case $\beta\geq 1$.
Somewhat surprisingly, the proof turns out to be extremely subtle and involved. In particular, it requires the use of several non-trivial properties of polynomials and carefully chosen inequalities among them, and then finally combining them using a carefully chosen weighted combination.
Our proof distinguishes the cases $p_j< p_i$ and $p_j \geq p_i$. The first case is substantially easier than the second one. In fact, in the first case we can show that local-global conjecture for every $\beta >0$. However, for the second case ($p_j \geq p_i$) when $0<\beta<1$ we only give a necessary condition for $i\prec_g j$.

While these results do not tackle the problem of the computational complexity of the problem, they nevertheless provide a deeper insight in its structure, and in addition speed up exhaustive search techniques in practice.
This is due to the fact that with the conditions for $i\prec_g j$ provided in this paper it is now possible to conclude $i\prec_g j$ for a significant portion of job pairs, for which previously known conditions failed.
In the final Section \ref{sec:experimental} of this paper, we study experimentally the impact of our contributions on the procedure A* for this problem. Improvements in the running time by a factor of 1000 or more  have been observed for some random instances (see  Section~\ref{subsec:gener}).

\begin{figure*}[htb]
\begin{tabular}{ccc}
\huge $0<\beta<1$
&
\huge $\beta>1$
&
\huge $\beta=2$
\\
\raisebox{-.5\height}{\includegraphics[width=5cm]{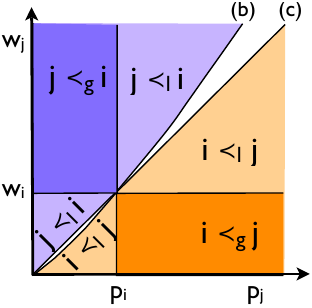}}
&
\raisebox{-.5\height}{\includegraphics[width=5cm]{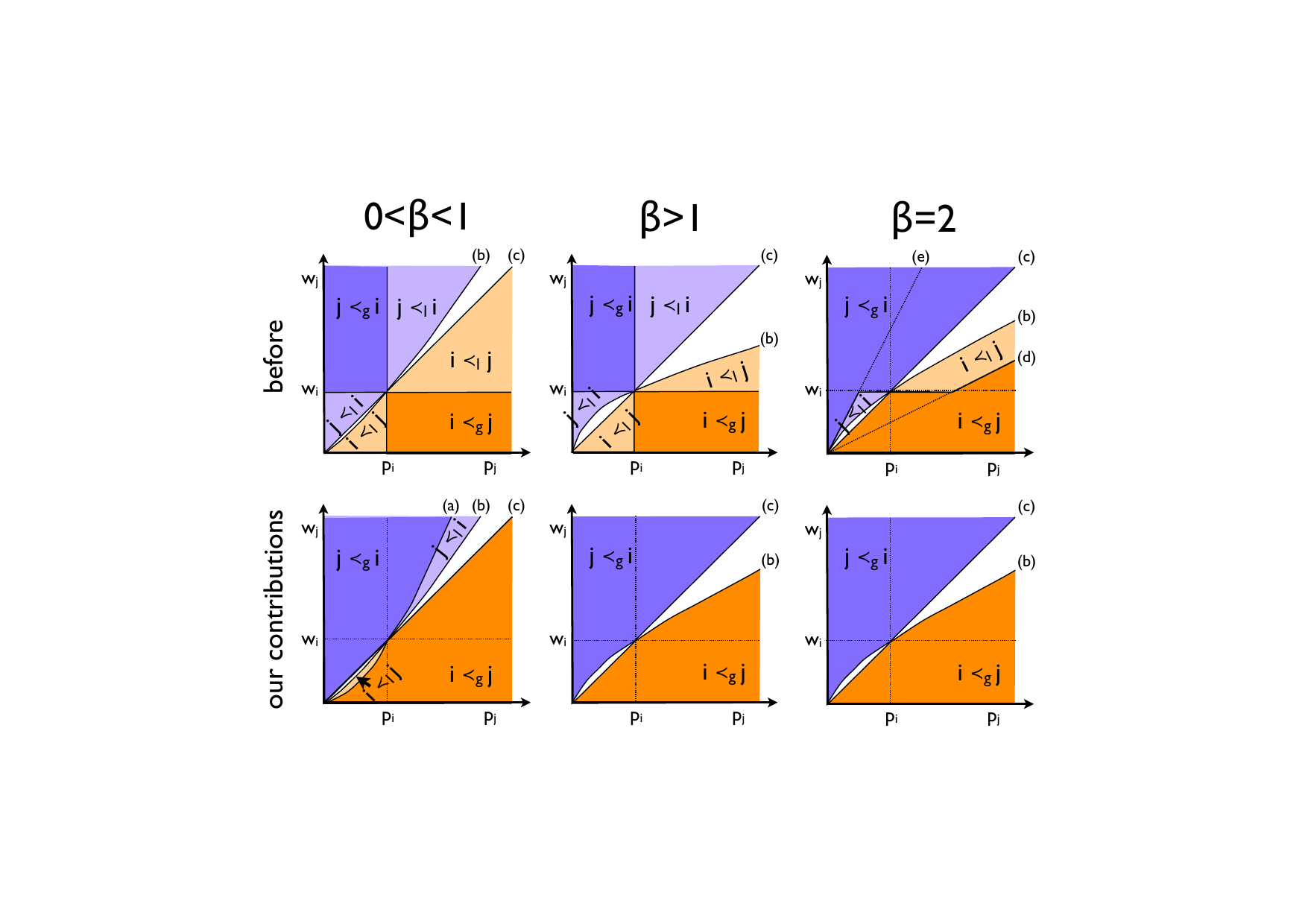}}
&
\raisebox{-.5\height}{\includegraphics[width=5cm]{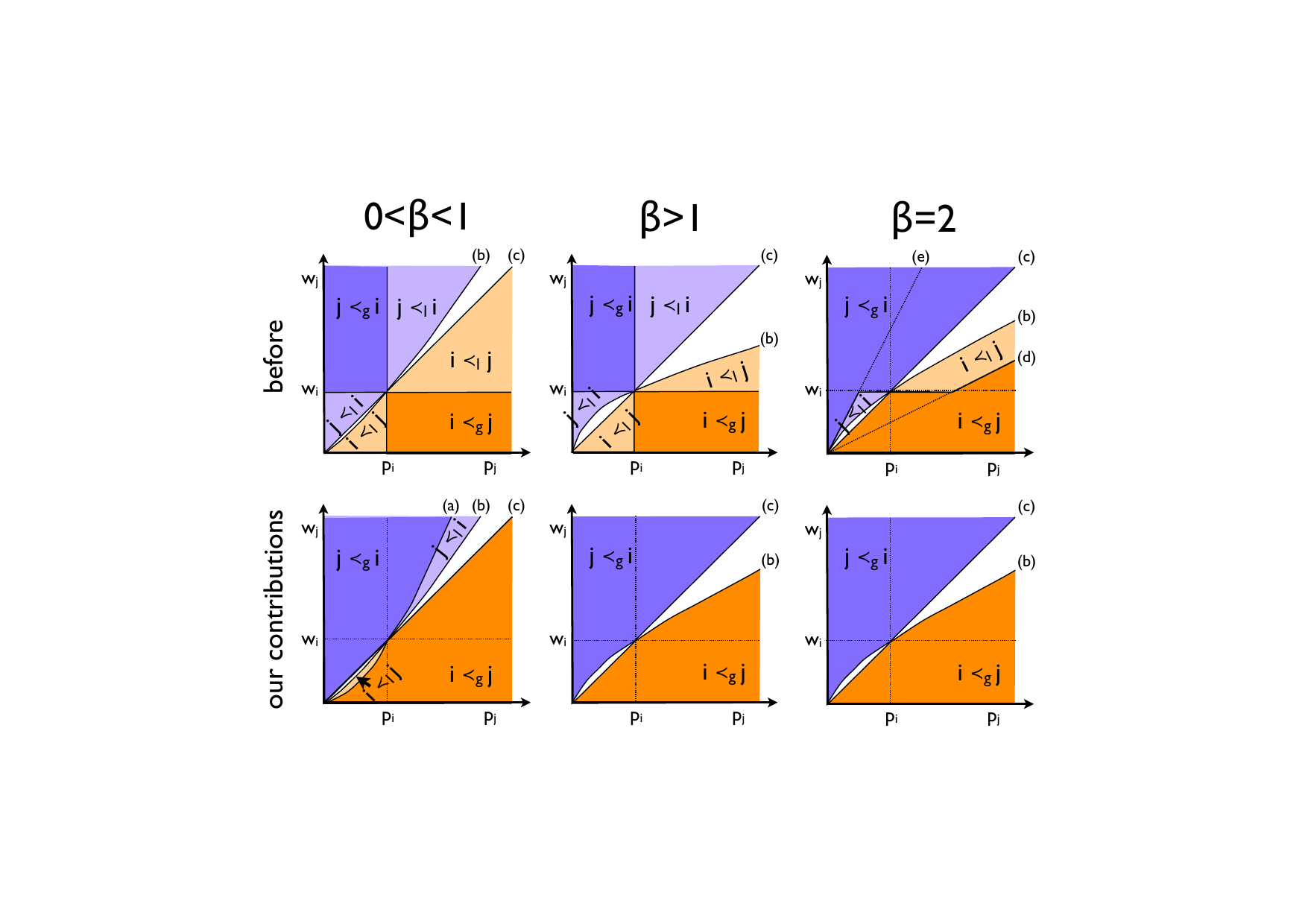}}
\\
\raisebox{-.5\height}{\includegraphics[width=5cm]{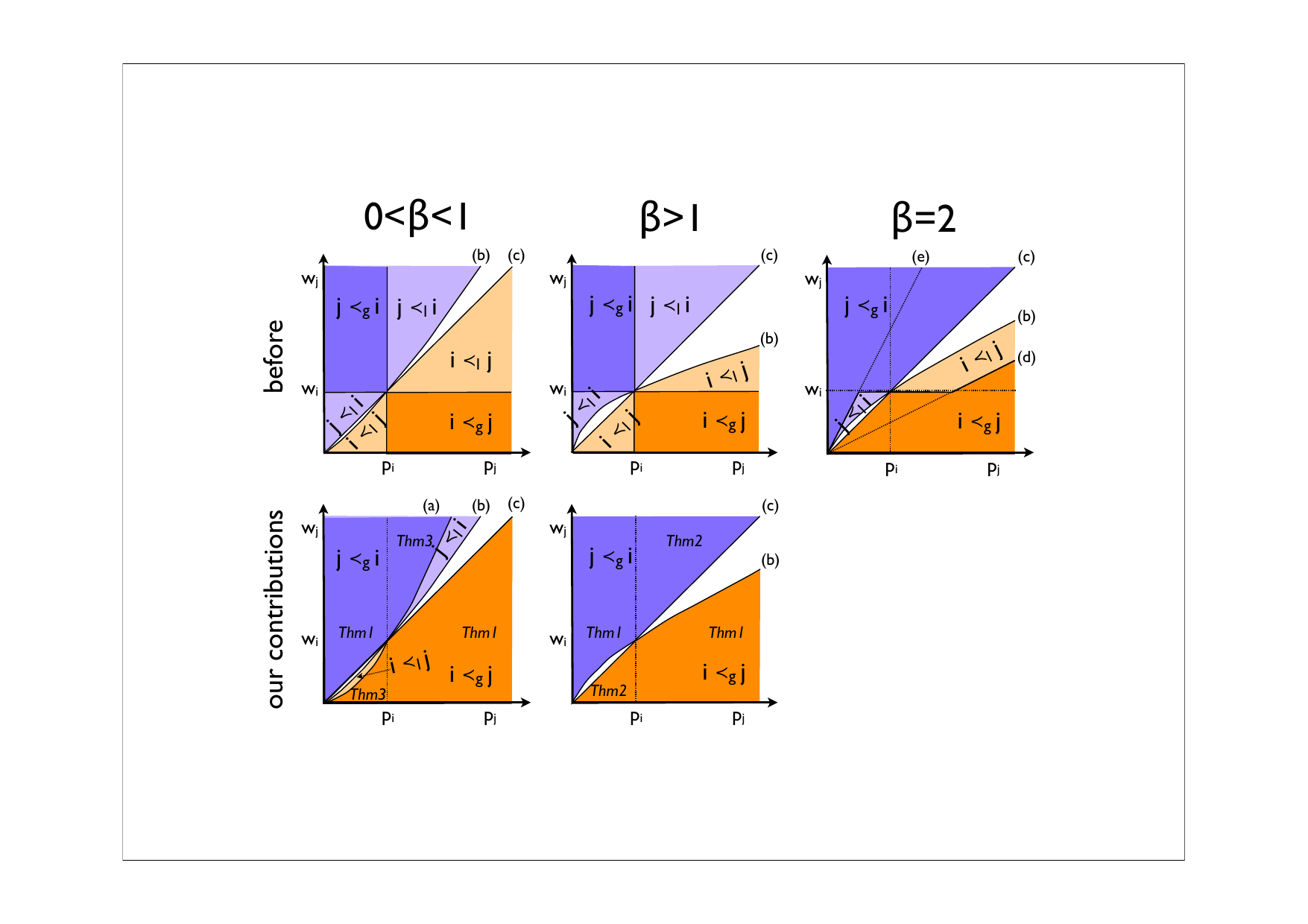}}
&
\raisebox{-.5\height}{\includegraphics[width=5cm]{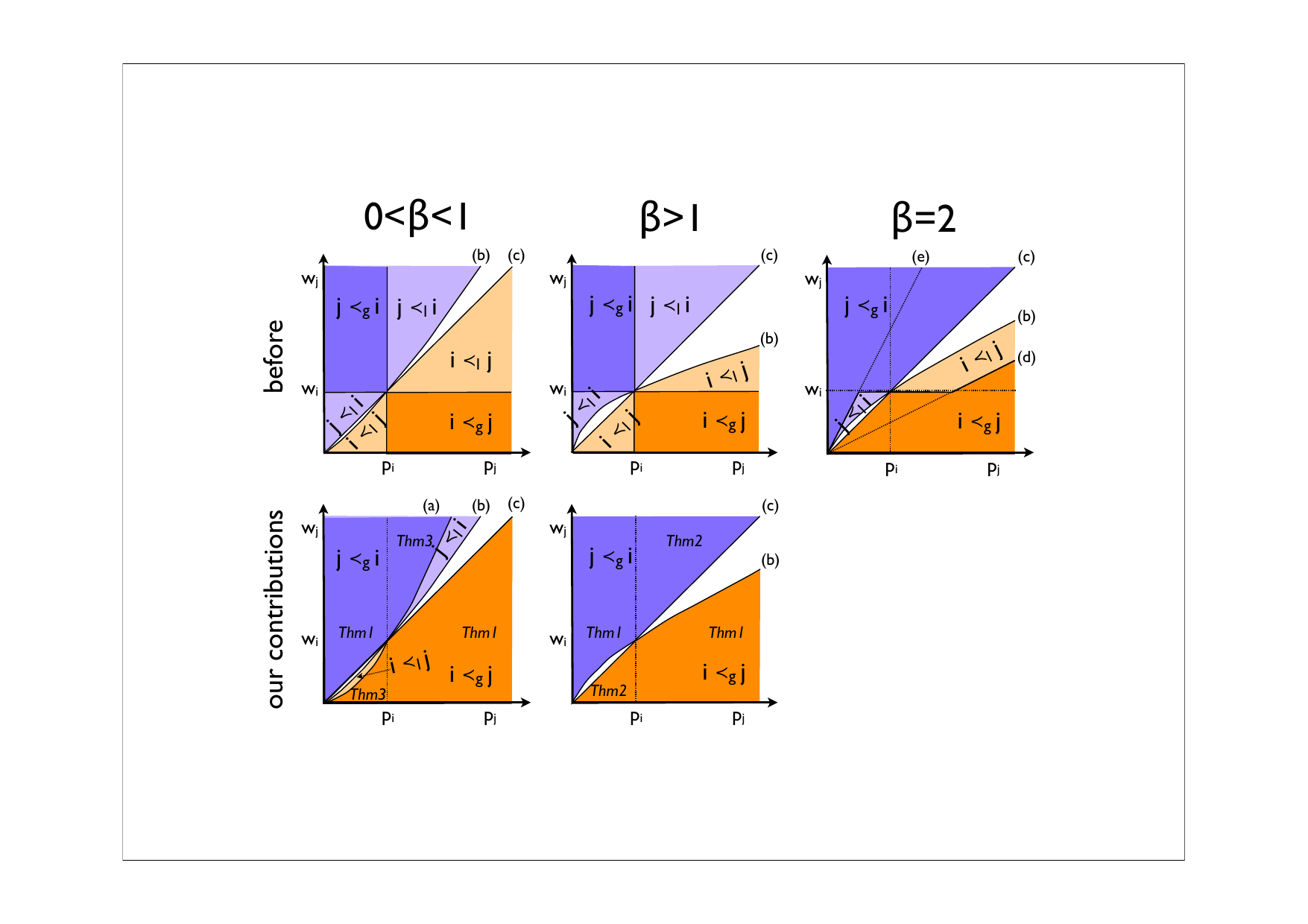}}
&
\begin{minipage}{4cm}
(a) $w_j= w_i (p_j/p_i)^{2-\beta}$
\\
(b) $w_j= w_{i} \frac{(p_i+p_j)^\beta-p_i^\beta}{(p_i+p_j)^\beta-p_j^\beta}$ 
\\
(c) $w_j=w_i p_j /p_i$
\\
(d) $w_j=w_i p_j  /2p_i$
\\
(e) $w_j=2 w_i p_j  /p_i$
\end{minipage}
\end{tabular}
\caption{Illustration of our contribution (bottom row) compared to previous results (top row), using a similar representation as in \citep{HohnJacobs:12:Experimental-analytical-quadratic-penalty-function}.  Every point in the diagram represents a job $j$ with respect to some fixed job $i$. The space is divided into regions where $i\prec_\ell j$ holds or $j\prec_\ell i$ or none. These regions contain subregions where we know that the stronger condition $\prec_g$ holds. The boundaries are defined by functions which are named from (a) to (e). The last 2 diagrams also indicate the related theorems.}
\label{fig:contribution}
\end{figure*}

\section{Technical lemmas}
\label{sec:techlem}

This section contains several technical lemmas used in the proof of our main theorems.

\begin{lemma}
 \label{lem:wf}
For $0 <\beta< 1, a < b$ and $p_i>p_j$,
 \[\left(\frac{p_i}{p_j}\right)^{1-\beta} \cdot \frac{f (b + p_i) - f (a + p_i)}{f (b + p_j) - f       (a + p_j)} \geq 1. \]
\end{lemma}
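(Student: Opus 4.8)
The plan is to recognize that the penalty function here is $f(t)=t^\beta$, and to reduce this two-variable inequality to the monotonicity of a single auxiliary function of $p$. Clearing the positive denominator, the claim is equivalent to $p_i^{1-\beta}\bigl(f(b+p_i)-f(a+p_i)\bigr)\ge p_j^{1-\beta}\bigl(f(b+p_j)-f(a+p_j)\bigr)$. So I would define $h(p):=p^{1-\beta}\bigl(f(b+p)-f(a+p)\bigr)$ for $p>0$ and aim to show that $h$ is non-decreasing; since $p_i>p_j$, this immediately gives $h(p_i)\ge h(p_j)$, which is exactly the statement.

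To prove $h$ is non-decreasing I would rewrite the difference as an integral of the derivative, $f(b+p)-f(a+p)=\int_a^b \beta\,(t+p)^{\beta-1}\,dt$, so that $h(p)=\beta\int_a^b \bigl(p/(t+p)\bigr)^{1-\beta}\,dt$. The point is that for each fixed $t\ge 0$ the integrand $\bigl(p/(t+p)\bigr)^{1-\beta}=\bigl(1+t/p\bigr)^{-(1-\beta)}$ is non-decreasing in $p$: as $p$ grows, $t/p$ shrinks, and since $1-\beta>0$ the power preserves this monotonicity. Integrating a family of functions each non-decreasing in $p$ keeps $h$ non-decreasing, which closes the argument.

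As an alternative route, differentiating directly gives $h'(p)=p^{-\beta}\bigl(G(b+p)-G(a+p)\bigr)$, where $G(x):=(1-\beta)x^\beta+\beta p\,x^{\beta-1}$, and $G'(x)=\beta(1-\beta)x^{\beta-2}(x-p)\ge 0$ whenever $x\ge p$; on the interval $[a+p,b+p]$ this holds, so $G(b+p)\ge G(a+p)$ and $h'\ge 0$. Both routes hinge on the same structural fact, but the integral formulation exposes its mechanism most transparently.

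The step I expect to be the crux is the sign condition, which requires $a\ge 0$ (equivalently $t\ge 0$ throughout the range of integration). This is precisely what makes $t/p$ decrease in $p$; without it the monotonicity reverses on part of the interval and the inequality can genuinely fail. For instance, with $\beta=1/2$, $a=-1/2$, $b=1/2$, $p_j=1$ and $p_i=100$ the left-hand side falls below $1$. Since $a$ and $b$ play the role of completion-time offsets in the intended application, the hypothesis $a\ge 0$ is available, and I would flag explicitly where it enters.
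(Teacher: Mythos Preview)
Your proof is correct, and your ``alternative route'' via direct differentiation is precisely the paper's argument: the paper defines $g(x)=x^{1-\beta}\bigl(f(b+x)-f(a+x)\bigr)$, computes $g'$, and reduces to showing that $r(z):=(z+x)^{\beta-1}\bigl((1-\beta)(z+x)+\beta x\bigr)$ is increasing in $z$, which after expanding is exactly your $G(z+p)$; the paper's $r'(z)=(z+x)^{\beta-2}(1-\beta)\beta z$ matches your $G'(x)=\beta(1-\beta)x^{\beta-2}(x-p)$ under the substitution $x\mapsto z+p$.

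Your primary integral formulation, writing $h(p)=\beta\int_a^b (1+t/p)^{-(1-\beta)}\,dt$ and observing pointwise monotonicity of the integrand, is a genuinely different packaging of the same fact. It buys transparency: the monotonicity is visible without any second auxiliary function, and the role of the hypothesis $a\ge 0$ is immediate (it is exactly what guarantees $t\ge 0$ on the range of integration). The paper's version hides this hypothesis in the step ``$r'(z)>0$ since $z>0$'', which silently requires $a>0$; you are right to flag it explicitly, and your counterexample with $a<0$ confirms it is not removable.
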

\begin{myproof}
  For this purpose we define the function
  \[ g (x) := x^{1-\beta} (f (b + x) - f (a + x)) \]
  and show that $g$ is increasing, which implies $g (p_i) / g (p_j) \geq 1$
    as required. So we have to show $ g' (x) >0$ in other words
    \begin{align*}
       \phantom{+}(1-\beta) x^{-\beta} \big(f (b + x) - f (a + x)\big) \\
       + x^{1-\beta} \big(f' (b + x) - f' (a + x)\big)
      &\geq 0
      \\
      \Leftrightarrow
      \phantom{+}(1-\beta) x^{-\beta} \big((b + x)^\beta - (a + x)^\beta\big)
      \\+ x^{1-\beta} \beta \big((b + x)^{\beta-1} - (a + x)^{\beta-1}\big)
     & \geq 0
      \\
      \Leftrightarrow
       \phantom{-}(b + x)^{\beta-1}((1-\beta)(b + x)+\beta x)
       \\-(a + x)^{\beta-1}((1-\beta)(a + x)+\beta x)
       &\geq 0 .
      \end{align*}
 To establish the last inequality, we introduce another function
    \[ r (z):= (z + x)^{\beta-1}((1-\beta)(z + x)+\beta x) \]
    and show that $r(z)$ is increasing, implying $r (b) \geq r (a)$. By
    analyzing its derivative we obtain
    \begin{align*}
      r' (z) =&  \phantom{+} (\beta-1) (z + x)^{\beta-2}((1-\beta)(z + x)+\beta x)
      \\&+
      (1-\beta)(z + x)^{\beta-1}\\
      = & (z + x)^{\beta - 2}(1-\beta)((\beta-1)(z + x) - x \beta + z+x )\\
       = & (z + x)^{\beta - 2}(1-\beta) (\beta z),
    \end{align*}
    which is positive as required. This concludes the proof. 
\end{myproof}

Some of our proofs are based on particular properties which are enumerated in the following lemma.
\begin{lemma}  \label{lem:log-concave}
 The function $f(t)=t^\beta$ defined for $\beta \geq 1$ satisfies the following properties.
\begin{enumerate}
\item $f(x) \geq 0$ for  $x\geq 0$.
\item $f$ is convex and non-decreasing, i.e. $f',f'' \geq 0$.
\item $f'$ is log-concave (i.e. $\log(f')$  is concave), which implies that $f''/f'$ is non-increasing.
Intuitively this means that $f$ does not increase much faster than  $e^x$.
\item For every $b>0$, the function $g_b(x)= f(b+e^x)-f(b)$ is log-convex in $x$.
Intuitively this means that $f(b+e^x)-f(b)$ increases faster than $e^{cx}$ for some $c>0$. Formally this means
\begin{equation}
\label{complex:cond}
y f'(b+y)/\big(f(b+y)-f(b)\big)
\end{equation}
 is increasing in $y$.
\end{enumerate}
\end{lemma}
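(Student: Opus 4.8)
The plan is to treat the four properties in order, with properties 1--3 being short computations and property 4 carrying essentially all the content. For property 1, $f(x)=x^\beta\geq 0$ whenever $x\geq 0$ is immediate. For property 2, I would differentiate directly: $f'(t)=\beta t^{\beta-1}$ and $f''(t)=\beta(\beta-1)t^{\beta-2}$, both non-negative for $t>0$ since $\beta\geq 1$. For property 3, rather than manipulating $\log(f')$ I would compute the ratio $f''/f'=(\beta-1)/t$; this is a non-negative constant divided by $t$, hence non-increasing, which is exactly the stated reformulation of log-concavity of $f'$ (equivalently $(\log f')''=-(\beta-1)/t^2\leq 0$).

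The real work is property 4. First I would justify the reduction from log-convexity of $g_b(x)=f(b+e^x)-f(b)$ to monotonicity of the expression in \eqref{complex:cond}. Log-convexity means $\frac{d}{dx}\log g_b(x)$ is non-decreasing; the chain rule with the substitution $y=e^x$ gives $\frac{d}{dx}\log g_b(x)=y f'(b+y)/\big(f(b+y)-f(b)\big)=:\phi(y)$, and since $y=e^x$ is increasing in $x$, it suffices to show that $\phi$ is non-decreasing in $y$.

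Next I would establish $\phi'\geq 0$ via the quotient rule. Writing $F=f(b+y)$, $F'=f'(b+y)$, $F''=f''(b+y)$ and $f_0=f(b)$, the numerator of $\phi'$ simplifies to $F'(F-f_0)+y\big[F''(F-f_0)-(F')^2\big]$, so the goal becomes showing this quantity is non-negative. Substituting $F=(b+y)^\beta$ and setting $u=b+y$, I expect the leading $u^{\beta+1}$ terms to cancel and the whole expression to factor as $\beta u^{\beta-2}\cdot b\cdot h(u)$, where $h(u)=u^\beta-\beta b^{\beta-1}u+(\beta-1)b^\beta$. The clean observation is that $h(u)\geq 0$ is precisely the tangent-line inequality $u^\beta\geq b^\beta+\beta b^{\beta-1}(u-b)$, i.e.\ the convexity of $f$ already recorded in property 2; alternatively one checks $h(b)=0$ together with $h'(u)=\beta(u^{\beta-1}-b^{\beta-1})\geq 0$ for $u\geq b$, which again is just monotonicity of $t^{\beta-1}$.

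The main obstacle is purely the algebraic bookkeeping in the third paragraph: carrying the quotient-rule derivative of $\phi$ through the power-function substitution and collecting terms until the tangent-line structure emerges. Once the expression collapses to $h(u)$, its positivity follows immediately from the convexity of $f$, so no new idea beyond careful computation is needed, and the four properties together are complete.
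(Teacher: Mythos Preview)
Your proof is correct. The paper itself omits the proof entirely, stating only that it ``is based on standard functional analysis and is omitted,'' so there is nothing to compare against; your computations for properties 1--3 and your derivation of property 4 (reducing to $\phi'(y)\geq 0$, factoring the numerator as $\beta u^{\beta-2}\,b\,h(u)$ with $h(u)=u^\beta-\beta b^{\beta-1}u+(\beta-1)b^\beta$, and recognising $h(u)\geq 0$ as the tangent-line inequality for the convex function $t^\beta$) are exactly the kind of direct calculation the paper gestures at.

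One very minor remark: at $\beta=1$ the expression in \eqref{complex:cond} is identically $1$, so ``non-decreasing'' (as you wrote) is the precise conclusion rather than ``increasing''; this does not affect the later uses in the paper.
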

The proof is based on standard functional analysis and is omitted.

\begin{lemma}
\label{lem:logcon}
For $a< b$, the fraction
$$ \frac{f'(b) - f'(a) }{f(b)-f(a)} $$
\begin{itemize}
\item is decreasing in $a$ and decreasing in $b$ for any $\beta>1$
\item and is increasing in $a$ and increasing in $b$ for any $0<\beta<1$.
\end{itemize}
\end{lemma}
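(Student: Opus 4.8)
The plan is to recast the fraction as a weighted average and then read off its monotonicity from the behavior of $f''/f'$. By the fundamental theorem of calculus, $f(b)-f(a)=\int_a^b f'(t)\,dt$ and $f'(b)-f'(a)=\int_a^b f''(t)\,dt$, so that
$$
\frac{f'(b)-f'(a)}{f(b)-f(a)}
=\frac{\int_a^b f''(t)\,dt}{\int_a^b f'(t)\,dt}
=\frac{\int_a^b g(t)\,f'(t)\,dt}{\int_a^b f'(t)\,dt},
\qquad g(t):=\frac{f''(t)}{f'(t)}.
$$
Writing $A(a,b)$ for this quantity, we see it is exactly the average of $g$ over $[a,b]$ against the positive weight $f'(t)$. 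For $f(t)=t^\beta$ a direct computation gives $g(t)=(\beta-1)/t$, which is decreasing in $t$ when $\beta>1$ and increasing in $t$ when $0<\beta<1$; the former case is also the content of the log-concavity of $f'$ recorded in Lemma~\ref{lem:log-concave}.

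It then remains to use the general principle that a weighted average of a monotone integrand inherits the corresponding monotonicity in each endpoint. I would establish this by a single quotient-rule computation, which yields
$$
\frac{\partial A}{\partial b}=\frac{f'(b)}{\int_a^b f'(t)\,dt}\,\bigl(g(b)-A(a,b)\bigr),
\qquad
\frac{\partial A}{\partial a}=\frac{f'(a)}{\int_a^b f'(t)\,dt}\,\bigl(A(a,b)-g(a)\bigr).
$$
Since the weights $f'$ are positive on $(0,\infty)$, the average $A(a,b)$ always lies between the extreme values of $g$ on $[a,b]$. When $g$ is decreasing ($\beta>1$) those extremes are $g(b)$ at the bottom and $g(a)$ at the top, so $g(b)\le A(a,b)\le g(a)$; both partial derivatives are then $\le 0$, proving that the fraction is decreasing in $a$ and in $b$. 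When $g$ is increasing ($0<\beta<1$) the inequalities reverse to $g(a)\le A(a,b)\le g(b)$, both partials are $\ge 0$, and the fraction is increasing in $a$ and in $b$.

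I do not expect a genuine obstacle: once the weighted-average formulation is in place, the statement follows from the fundamental theorem of calculus together with the elementary fact that an average lies between the minimum and maximum of the averaged quantity. The only point requiring care is bookkeeping, namely matching the direction of each endpoint bound ($g(b)$ versus $g(a)$) to the sign of the corresponding partial derivative, and keeping the two regimes $\beta>1$ and $0<\beta<1$ aligned with the decreasing, respectively increasing, behavior of $g(t)=(\beta-1)/t$.
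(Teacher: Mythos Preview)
Your proof is correct and follows essentially the same route as the paper: both write the fraction as $\int_a^b f''/\int_a^b f'$, view it as a weighted average of $g=f''/f'$ against the positive weight $f'$, and then use the monotonicity of $g$ together with the fact that an average lies between the extreme values of the integrand. The only cosmetic difference is that the paper finishes with a discrete mediant-type comparison over subintervals rather than your direct quotient-rule differentiation; the underlying idea is identical.
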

\begin{myproof}
First we consider the case $\beta>1$. We can write $f'(b) - f'(a) = \int_{a}^b f''(x)dx$ and $f(b)-f(a) = \int_a^b f'(x)dx$. Note that $f''$ and $f'$ are non-negative.
By $\beta>1$ and Lemma~\ref{lem:log-concave} $f'$ is log-concave, which means that $f''(x)/f'(x)$  is non-increasing in $x$. This implies
\begin{align*}
  \int_a^b \frac{f''(b)}{f'(b)} f'(x) dx
  &\leq
  \int_a^b \frac{f''(x)}{f'(x)} f'(x) dx
  \\
 &\leq
  \int_a^b \frac{f''(a)}{f'(a)} f'(x) dx
\end{align*}
Hence
\[
\frac{ \int_a^b f''(b) dx }{ \int_a^b f'(b) dx }
\leq
\frac{ \int_a^b f''(x) dx }{ \int_a^b f'(x) dx }
\leq
\frac{ \int_a^b f''(a) dx }{ \int_a^b f'(a) dx }.
\]
For positive values $u_1,u_2,u_3,v_1,v_2,v_3$ with $u_1/v_1 \leq u_2/v_2 \leq u_3/v_3$ we have
\[
\frac{ u_1+u_2 }{ v_1+v_2 }
\leq
\frac{ u_2 }{ v_2 }
\leq
\frac{ u_2+u_3 }{ v_2+v_3 }.
\]
We use this property for $a'<a<b<b'$
\begin{align*}
\frac{u_1}{v_1} &= \frac{ \int_{b}^{b'} f''(x) dx }{ \int_{b}^{b'} f'(x) dx },
\\
\frac{u_2}{v_2} &= \frac{ \int_{a}^{b} f''(x) dx }{ \int_{a}^{b} f'(x) dx },
\\
\frac{u_3}{v_3} &= \frac{ \int_{a'}^{a} f''(x) dx }{ \int_{a'}^{a} f'(x) dx }
\end{align*}
and obtain
\[
\frac{ \int_{a'}^{b} f''(x) dx }{ \int_{a'}^{b} f'(x) dx }
\leq
\frac{ \int_{a}^{b} f''(x) dx }{ \int_{a}^{b} f'(x) dx }
\leq
\frac{ \int_{a}^{b'} f''(x) dx }{ \int_{a}^{b'} f'(x) dx }.
\]

For the case $0<\beta<1$ the argument is the same using the fact that $f''(x)/f'(x)$  is non-decreasing in $x$.
\end{myproof}

The previous lemma permits to show the following corollary.
\begin{corollary}
    \label{lem:q}For $t \geq 0$ let the function $q$ be defined as
    \[ q (t):= \frac{f (t + p_j) - f (t)}{f (t + p_i) - f (t)}.
    \]
    For $p_i>p_j$, if $\beta > 1$ then $q$ is increasing and if $0 < \beta < 1$ then $q$ is
    decreasing.
  \end{corollary}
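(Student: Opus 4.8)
The plan is to differentiate $q$ and reduce the sign of $q'$ to a comparison between two instances of the fraction already studied in Lemma~\ref{lem:logcon}. First I would write $A(t) := f(t+p_j) - f(t)$ and $B(t) := f(t+p_i) - f(t)$, so that $q = A/B$. Since $f$ is strictly increasing, both $A$ and $B$ are strictly positive for $t > 0$, and the quotient rule gives
\[
q'(t) = \frac{A'(t)B(t) - A(t)B'(t)}{B(t)^2} = \frac{A(t)}{B(t)}\left(\frac{A'(t)}{A(t)} - \frac{B'(t)}{B(t)}\right).
\]
As $A/B > 0$, the sign of $q'(t)$ is exactly the sign of $A'/A - B'/B$, so it suffices to compare these two logarithmic derivatives.

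The key observation is that $A'/A$ and $B'/B$ are precisely the fraction appearing in Lemma~\ref{lem:logcon}. Indeed,
\[
\frac{A'(t)}{A(t)} = \frac{f'(t+p_j) - f'(t)}{f(t+p_j) - f(t)}, \qquad \frac{B'(t)}{B(t)} = \frac{f'(t+p_i) - f'(t)}{f(t+p_i) - f(t)},
\]
which are the expression $(f'(b) - f'(a))/(f(b) - f(a))$ evaluated at the common lower endpoint $a = t$ and at the upper endpoints $b = t + p_j$ and $b = t + p_i$ respectively. Since $p_i > p_j$, the upper endpoint associated with $B'/B$ is the larger one. For $\beta > 1$, Lemma~\ref{lem:logcon} states that this fraction is decreasing in $b$, so $B'/B \leq A'/A$, whence $q' \geq 0$ and $q$ is increasing. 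For $0 < \beta < 1$ the fraction is increasing in $b$, so $B'/B \geq A'/A$, whence $q' \leq 0$ and $q$ is decreasing.

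I do not expect a serious obstacle here; the argument is a direct consequence of the one-variable monotonicity established in Lemma~\ref{lem:logcon}. The main care needed is bookkeeping: one must keep straight that both fractions share the same lower endpoint $a = t$ and differ only in $b$, so that monotonicity in the single variable $b$ applies cleanly, and one must track the opposite directions of this monotonicity in the two regimes of $\beta$. The only genuinely technical point is differentiability at $t = 0$: for $0 < \beta < 1$ the derivative $f'(t) = \beta t^{\beta-1}$ blows up as $t \to 0^+$, so the differentiation argument is carried out on the open interval $t > 0$, and since $q$ is continuous at $t = 0$ the established monotonicity then extends to all $t \geq 0$ by continuity.
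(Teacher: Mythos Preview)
Your proposal is correct and is essentially the same argument as the paper's: both reduce the sign of $q'$ to the sign of $(\ln q)' = A'/A - B'/B$ and then invoke Lemma~\ref{lem:logcon} with common lower endpoint $a=t$ and upper endpoints $t+p_j < t+p_i$. Your extra care about differentiability at $t=0$ for $0<\beta<1$ is a welcome refinement that the paper omits.
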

\begin{myproof}
We only prove the case $\beta>1$, the other case is similar.
Showing that $q(t)$ is increasing, it suffices to show that
\[
\ln q(t) = \ln(f (t + p_j) - f (t)) - \ln(f (t + p_i) - f (t))
\] is increasing.
To this purpose we notice that the derivative
\[
\frac{f'(t + p_j) - f'(t))}{f (t + p_j) - f (t))}
-
\frac{f'(t + p_i) - f'(t))}{f (t + p_j) - f (t))}
\]
is positive by Lemma~\ref{lem:q} and $p_i>p_j$.
\end{myproof}

\begin{lemma}
\label{inc}
If  $\beta>1, a < b$ and  $$g(x) = x \frac{f(b+x) -f(x+a)}{f(b+x) - f(b)},$$ then $g(x)$ is a non-decreasing function of $x$.
\end{lemma}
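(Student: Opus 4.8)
The plan is to write $g$ as a product of two non-negative factors, each of which can be shown to be non-decreasing using the properties collected in Lemma~\ref{lem:log-concave}; since a product of non-negative non-decreasing functions is non-decreasing, the claim follows. Concretely, I would insert the factor $f'(b+x)$ and split
\[
  g(x) = \underbrace{\frac{x\,f'(b+x)}{f(b+x)-f(b)}}_{A(x)}\cdot \underbrace{\frac{f(b+x)-f(a+x)}{f'(b+x)}}_{B(x)} .
\]
For $x>0$ both denominators are strictly positive (since $f$ is increasing and $f'>0$), so $A,B>0$ and the identity is immediate.

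The first factor $A$ is exactly the quantity appearing in property~4 of Lemma~\ref{lem:log-concave} (with $y=x$), which asserts that $y\,f'(b+y)/(f(b+y)-f(b))$ is increasing; hence $A$ is non-decreasing. For the second factor I would use the integral representation $f(b+x)-f(a+x)=\int_a^b f'(s+x)\,ds$ to write
\[
  B(x)=\int_a^b \frac{f'(s+x)}{f'(b+x)}\,ds .
\]
It then suffices to show that for each fixed $s\in[a,b]$ the ratio $\rho_s(x)=f'(s+x)/f'(b+x)$ is non-decreasing in $x$, because monotonicity is preserved under integration over $s$. Differentiating the logarithm gives $\frac{d}{dx}\ln\rho_s(x)=\frac{f''(s+x)}{f'(s+x)}-\frac{f''(b+x)}{f'(b+x)}$, which is non-negative because $s\le b$ and, by the log-concavity of $f'$ (property~3 of Lemma~\ref{lem:log-concave}), $f''/f'$ is non-increasing. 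Thus each $\rho_s$ is non-decreasing, so $B$ is non-decreasing, and the product $A\cdot B=g$ is non-decreasing, as the two-step chain $A(x_2)B(x_2)\ge A(x_1)B(x_2)\ge A(x_1)B(x_1)$ for $x_1\le x_2$ confirms (using $A,B\ge 0$).

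The only genuinely creative step is choosing this factorization: multiplying and dividing by $f'(b+x)$ is precisely what turns the first factor into the log-convexity quantity of property~4 and the second into an average of the ratios $f'(s+x)/f'(b+x)$ governed by the log-concavity of $f'$. Everything else is routine. The minor points to check are the positivity of the two factors (needed both for the product-monotonicity step and to rule out sign changes) and that the domain is such that $a+x>0$, so that $f$ and its derivatives are defined throughout; on $(0,\infty)$ the argument is clean and extends to the boundary by continuity.
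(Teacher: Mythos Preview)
Your proof is correct and is essentially the same argument as the paper's, repackaged multiplicatively: the paper computes $(\ln g)'$ and bounds the term $\frac{f'(b+x)-f'(a+x)}{f(b+x)-f(a+x)}$ from below by $f''(b+x)/f'(b+x)$ (via Lemma~\ref{lem:logcon}, i.e.\ log-concavity of $f'$) so that what remains is exactly $(\ln A)'$, which is non-negative by property~4; your factorization $g=A\cdot B$ isolates these same two pieces up front and handles $B$ with the integral representation instead of invoking Lemma~\ref{lem:logcon}. The ingredients and the logical structure are identical; your version is marginally cleaner in that it bypasses the intermediate Lemma~\ref{lem:logcon}.
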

\begin{myproof}
Equivalently, we show that $\ln g(x) = \ln(x) + \ln \big(f(b+x) - f(a+x) \big)  - \ln \big(f(b+x)-f(b)\big) $ is non-decreasing.
Taking derivative of the right hand side, we show
$$ \frac{1}{x} +  \frac{f'(b+x) - f'(a+x)}{f(b+x) - f(a+x)}   - \frac{f'(b+x)}{f(b+x)-f(b)} \geq 0.$$
By log-concavity of $f'$ and Lemma \ref{lem:logcon}, the second term is minimized when $a$ approaches $b$, and hence is at least $f''(b+x)/f'(b+x)$.
Therefore it is enough to show that
$$ \frac{1}{x} +  \frac{f''(b+x)}{f'(b+x)}   - \frac{f'(b+x)}{f(b+x)-f(b)} \geq 0,$$
which is equivalent in showing that
$$ \ln (x) + \ln (f'(b+x)) - \ln \big(f(b+x)-f(b)\big)$$ is non-decreasing $x$.
The later derives from the fact that
$ x f'(b+x)/\big(f(b+x)-f(b)\big)$ is non-decreasing, which follows from assumption in \eqref{complex:cond}.
\end{myproof}

\section{Characterization of the local order property}
\label{sec:local}

To simplify notation, throughout the  paper we assume that no two jobs
have  the same  processing time,  weight or  Smith-ratio  (weight over
processing  time).  The proofs extend to the general case by considering an additional
tie-breaking rule between jobs with identical parameters.
For convenience  we  extend  the  notation of  the
penalty function $f$ to the makespan of schedule $S$ as $f(S):=f(\sum_{i\in S}
p_i)$. Also we denote by $F(S)$ the cost of schedule $S$.

In order to analyze the effect of exchanging adjacent jobs,
we define the following function on $t\geq 0$
\[\phi_{ij}(t):=\frac{f(t+p_i+p_j)-f(t+p_j)}{f(t+p_i+p_j)-f(t+p_i)}.\]
Note  that  $\phi_{ij}(t)$  is  well  defined since  $f$  is  strictly
increasing by assumption and the durations $p_i,p_j$ are non-zero.  This function $\phi_{ij}$   permits us to analyze algebraically
the local order property, since
\begin{align} \label{eq:localphi}
i \prec_{\ell} j\:\Leftrightarrow\: \forall t\geq0: \phi_{ij}(t) < \frac{w_i}{w_j}.
\end{align}

The  following technical  lemmas  show properties  of $\phi_{ij}$  and
relate them to properties of $f$.

\begin{lemma}\label{lem:monotone}
If $p_i\neq p_j$ then $\phi_{ij}$ is strictly monotone, in particular:
\begin{itemize}
\item If $p_i>p_j$ and $\beta>1$, then $\phi_{ij}$ is strictly increasing.
\item If $p_i<p_j$ and $\beta>1$, then $\phi_{ij}$ is strictly decreasing.
\item If $p_i>p_j$ and $\beta<1$, then $\phi_{ij}$ is strictly decreasing.
\item If $p_i<p_j$ and $\beta<1$, then $\phi_{ij}$ is strictly  increasing.
\end{itemize}
\end{lemma}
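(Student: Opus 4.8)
The plan is to establish strict monotonicity of $\phi_{ij}$ by examining the sign of the derivative of $\ln\phi_{ij}$, and to read that sign off directly from Lemma~\ref{lem:logcon}. Since $f$ is strictly increasing and $p_i,p_j>0$, both the numerator and denominator of $\phi_{ij}(t)$ are strictly positive, so $\phi_{ij}(t)>0$ and the sign of $\phi_{ij}'(t)$ agrees with the sign of $(\ln\phi_{ij})'(t)$. Working with the logarithm turns the quotient into a difference, which is what makes Lemma~\ref{lem:logcon} applicable.

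First I would write $\ln\phi_{ij}(t)=\ln\big(f(t+p_i+p_j)-f(t+p_j)\big)-\ln\big(f(t+p_i+p_j)-f(t+p_i)\big)$ and differentiate to obtain
\[
(\ln\phi_{ij})'(t)=\frac{f'(b)-f'(a_1)}{f(b)-f(a_1)}-\frac{f'(b)-f'(a_2)}{f(b)-f(a_2)},
\]
where $b:=t+p_i+p_j$, $a_1:=t+p_j$ and $a_2:=t+p_i$. The crucial observation is that the two terms share the \emph{same} upper argument $b$ and differ only in their lower argument, with $a_1<a_2\Leftrightarrow p_j<p_i$. This is exactly the object studied in Lemma~\ref{lem:logcon}, which controls how the fraction $(f'(b)-f'(a))/(f(b)-f(a))$ varies with $a$.

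Then I would invoke Lemma~\ref{lem:logcon}: for $\beta>1$ this fraction is strictly decreasing in $a$, whereas for $0<\beta<1$ it is strictly increasing in $a$. Combining this with the sign of $a_1-a_2=p_j-p_i$ yields all four cases. For example, when $p_i>p_j$ and $\beta>1$ we have $a_1<a_2$ together with a fraction decreasing in $a$, so the first term exceeds the second, the logarithmic derivative is strictly positive, and $\phi_{ij}$ is strictly increasing; the remaining three sign combinations are handled identically by flipping the relevant inequality. Strictness is guaranteed by the hypothesis $p_i\neq p_j$ (so $a_1\neq a_2$) together with the strict monotonicity asserted in Lemma~\ref{lem:logcon}.

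I do not expect a genuine obstacle here; the only thing to keep straight is the bookkeeping of signs. Once $(\ln\phi_{ij})'$ is recognized as a difference of two evaluations of the fraction from Lemma~\ref{lem:logcon} at a common $b$ and distinct lower limits, the four cases are a mechanical consequence of pairing the two-way sign of $p_i-p_j$ with the two-way monotonicity supplied by that lemma. The main subtlety is simply that it is the dependence on the lower argument $a$ that is being exploited, since the upper argument $b$ is identical in both terms and therefore contributes nothing to the comparison.
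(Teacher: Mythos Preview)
Your proposal is correct and follows essentially the same route as the paper: take the logarithmic derivative of $\phi_{ij}$, recognize it as a difference of two evaluations of the fraction in Lemma~\ref{lem:logcon} at a common upper argument $b=t+p_i+p_j$ and distinct lower arguments $t+p_j$, $t+p_i$, and read off the sign from that lemma. The paper only spells out the case $p_i>p_j$, $\beta>1$ and declares the others similar, whereas you make the four-way case split and the role of the lower argument explicit, but the underlying argument is identical.
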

\begin{myproof}
We only show this statement for the first case $p_i>p_j$ and $\beta>1$, and the other cases are similar.
In order to show that $\phi_{ij}$ is strictly increasing we prove that $\ln \phi_{ij}$ is increasing. For this we analyze its derivative which is
\[
\frac{ f'(t+p_i+p_j)-f'(t+p_j)}{f(t+p_i+p_j)-f(t+p_j)} - \frac{ f'(t+p_i+p_j)-f'(t+p_i)}{f(t+p_i+p_j)-f(t+p_i)}.
\]
The derivative is positive by Lemma~\ref{lem:logcon}.
\end{myproof}

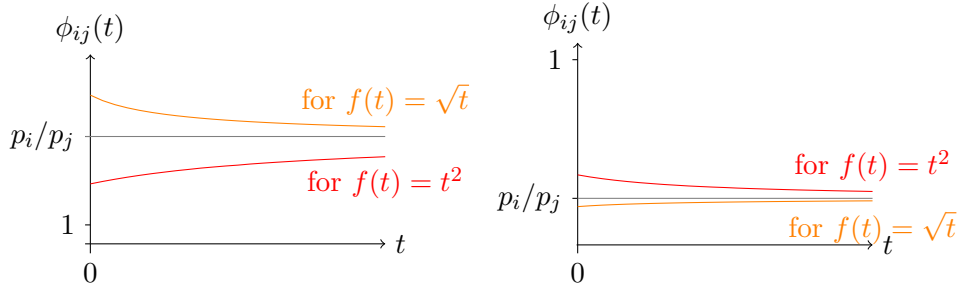
\begin{figure}[ht]
\begin{tikzpicture}[scale=0.14,y=1.8cm]
  \draw[->] (19.5,0) -- (48,0) node[right] {$t$};
  \draw[->] (20,0) -- (20,10) node[above] {$\phi_{ij}(t)$};
  \draw[-](20,0)--(20,-0.5) node[below] {$0$};
  \draw[-] (20,17./3)--(19.5,17./3) node[left] {$p_i/p_j$};
  \draw[-] (20,1)--(19.5,1) node[left] {1};
  \draw[color=gray,domain=19.5:48]  plot (\x, 17.0/3) ;
  \draw[color=orange,domain=20:48]    plot   (\x,{(sqrt(\x)-sqrt(\x-17))/(sqrt(\x)-sqrt(\x-3))})
  node[above] {for $f(t) = \sqrt t$};
  \draw[color=red,domain=20:48] plot (\x,{ (pow(\x,2.3)-pow(\x-17,2.3)) /
    (pow(\x,2.3)-pow(\x-3,2.3)) })
  node[below] {for $f(t) = t^2$};
\end{tikzpicture}
\begin{tikzpicture}[scale=0.14,y=16cm]
  \draw[->] (19.9,-0.1) -- (48,-0.1) node[right] {$t$};
  \draw[->] (20,-0.1) -- (20,1.1) node[above] {$\phi_{ij}(t)$};
  \draw[-](20,-0.1)--(20,-0.15) node[below] {$0$};
  \draw[-] (20,3/17.)--(19.5,3/17.) node[left] {$p_i/p_j$};
  \draw[-] (20,1)--(19.5,1) node[left] {1};
  \draw[color=gray,domain=20:48]  plot (\x, 3/17.) ;
  \draw[color=orange,domain=20:48]    plot   (\x,{(sqrt(\x)-sqrt(\x-3))/(sqrt(\x)-sqrt(\x-17))})
  node[below] {for $f(t) = \sqrt t$};
  \draw[color=red,domain=20:48] plot (\x,{ (pow(\x,2.3)-pow(\x-3,2.3)) /
    (pow(\x,2.3)-pow(\x-17,2.3)) })
  node[above] {for $f(t) = t^2$};
\end{tikzpicture}
\caption{Examples of  the function $\phi_{ij}(t)$  for $\beta=0.5$ and
  $\beta=2$, as well as for the cases $p_i>p_j$ and $p_i<p_j$.}
\label{fig:sqrt}
\end{figure}

\begin{lemma}\label{lem:limit}
For any jobs $i,j$, we have
$         \lim_{t\rightarrow \infty} \phi_{ij}(t) = p_i/p_j.$
\end{lemma}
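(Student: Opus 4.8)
The plan is to reduce the limit to a statement about the derivative $f'$ via the mean value theorem, so that the monotonicity already established in Lemma~\ref{lem:monotone} is not even needed to guarantee that the limit exists.

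First I would observe that the numerator and denominator of $\phi_{ij}$ are differences of $f$ over intervals of lengths $p_i$ and $p_j$ respectively. By the mean value theorem there exist points $\xi(t)\in(t+p_j,\,t+p_i+p_j)$ and $\eta(t)\in(t+p_i,\,t+p_i+p_j)$ with
\[ f(t+p_i+p_j)-f(t+p_j)=p_i\,f'(\xi(t)), \qquad f(t+p_i+p_j)-f(t+p_i)=p_j\,f'(\eta(t)). \]
Dividing these two identities gives $\phi_{ij}(t)=\frac{p_i}{p_j}\cdot\frac{f'(\xi(t))}{f'(\eta(t))}$, so the entire problem collapses to showing that the last factor tends to $1$.

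The remaining step is to control the quotient $f'(\xi(t))/f'(\eta(t))$. Since $\xi(t)$ and $\eta(t)$ both lie in the window $[\,t,\,t+p_i+p_j\,]$, they satisfy $\xi(t)=t\,(1+o(1))$ and $\eta(t)=t\,(1+o(1))$, whence $\xi(t)/\eta(t)\to 1$ as $t\to\infty$. For $f(t)=t^\beta$ we have $f'(x)=\beta x^{\beta-1}$, so $f'(\xi(t))/f'(\eta(t))=(\xi(t)/\eta(t))^{\beta-1}\to 1$ for every fixed $\beta>0$, and the conclusion $\lim_{t\to\infty}\phi_{ij}(t)=p_i/p_j$ follows. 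Equivalently, one could factor $t^\beta$ out of both numerator and denominator and expand $(1+c/t)^\beta=1+\beta c/t+O(1/t^2)$; the surviving first-order terms are $\beta p_i/t$ and $\beta p_j/t$, again with ratio $p_i/p_j$.

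I do not expect a genuine obstacle here: the argument is uniform in $\beta$ because it uses only that the ratio $f'(\xi)/f'(\eta)$ depends on $\xi$ and $\eta$ through their quotient, which tends to $1$. The single point deserving a line of care is that for $0<\beta<1$ the numerator and denominator each tend to $0$, so a na\"ive ratio-of-limits is an indeterminate $0/0$ form; the mean-value formulation sidesteps this cleanly, which is why I would prefer it to an l'H\^opital-style computation that would otherwise have to be split into the cases $\beta>1$ (an $\infty/\infty$ form), $\beta=1$ (already determinate), and $0<\beta<1$ (a $0/0$ form).
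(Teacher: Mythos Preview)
Your proposal is correct and follows essentially the same route as the paper: apply the mean value theorem to the numerator and denominator to extract the factors $p_i$ and $p_j$, then observe that the remaining ratio of derivative values $f'(\xi)/f'(\eta)=(\xi/\eta)^{\beta-1}$ tends to $1$ because $\xi$ and $\eta$ differ from $t$ by bounded amounts. The paper phrases the last step as $\lim_{t\to\infty}(t+a)^{\beta-1}/t^{\beta-1}=1$, which is exactly your observation in slightly different notation.
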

\begin{myproof}
By the mean value Theorem, for any differentiable function $f$ and $y,x$ it holds that $f(y) = f(x) + (y-x) f'(\eta)$ for some $\eta \in [x,y]$.
Thus $(t+p_i+p_j)^\beta - (t+p_i)^\beta = \beta p_j (t+p_i+\eta)^{\beta-1}$ for some $\eta \in [0,p_j]$ and
$(t+p_i+p_j)^\beta-(t+p_j)^\beta = \beta p_i(t+p_j+\gamma)^{\beta-1}$ for some $\gamma \in [0,p_i]$.
Moreover, for any $\beta >0$ and $a>0$, $\lim_{t \rightarrow \infty} (t+a)^{\beta-1}/t^{\beta-1} =  1$.
Therefore,
$$\lim_{t \rightarrow \infty} \phi_{ij}(t)=\lim_{t \rightarrow \infty} \frac{(t+p_i+p_j)^\beta-(t+p_j)^\beta}{(t+p_i+p_j)^\beta-(t+p_i)^\beta} = \frac{p_i}{p_j}.$$


\end{myproof}

These two lemmas permit to characterize the local order property, see Figure~\ref{fig:contribution}.
\begin{lemma} \label{lem:local}
  For any two jobs $i,j$ we have $i\prec_\ell j$ if and only if
  \begin{itemize}
    \item $\beta> 1$ and $p_i\leq p_j$ and $w_j/w_i\leq \phi_{ji}(0)$ or
    \item $\beta> 1$ and $p_i\geq p_j$ and $w_j/w_i\leq p_j/p_i$ or
    \item $0<\beta < 1$ and $p_i\leq p_j$ and $w_j/w_i\leq p_j/p_i$ or
    \item $0<\beta < 1$ and $p_i\geq p_j$ and $w_j/w_i\leq \phi_{ji}(0)$.
  \end{itemize}
\end{lemma}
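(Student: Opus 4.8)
The plan is to combine the algebraic characterization \eqref{eq:localphi} with the monotonicity and limit already established for $\phi_{ij}$. By \eqref{eq:localphi}, the relation $i\prec_\ell j$ holds if and only if $\phi_{ij}(t) < w_i/w_j$ for every $t\geq 0$, which is the same as requiring that $\sup_{t\geq 0}\phi_{ij}(t)$ not exceed $w_i/w_j$. Thus the entire proof reduces to computing this supremum in each of the four regimes delimited by the signs of $\beta-1$ and $p_i-p_j$.

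First I would determine the supremum. By Lemma~\ref{lem:monotone}, $\phi_{ij}$ is strictly monotone whenever $p_i\neq p_j$, and by Lemma~\ref{lem:limit} it tends to $p_i/p_j$ as $t\to\infty$. Hence when $\phi_{ij}$ is increasing its values lie in the interval $(\phi_{ij}(0),\,p_i/p_j)$ and the supremum is the (unattained) limit $p_i/p_j$, whereas when $\phi_{ij}$ is decreasing its values lie in $(p_i/p_j,\,\phi_{ij}(0))$ and the supremum is $\phi_{ij}(0)$, attained at $t=0$. Matching the four sign patterns of Lemma~\ref{lem:monotone}, the function increases in the cases $(\beta>1,\,p_i>p_j)$ and $(\beta<1,\,p_i<p_j)$, so the supremum is $p_i/p_j$, and it decreases in the cases $(\beta>1,\,p_i<p_j)$ and $(\beta<1,\,p_i>p_j)$, so the supremum is $\phi_{ij}(0)$.

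Then I would convert each threshold into the stated form. In the increasing cases the condition $\sup\phi_{ij}\leq w_i/w_j$ becomes $p_i/p_j\leq w_i/w_j$, i.e.\ $w_j/w_i\leq p_j/p_i$, which gives the second and third bullets. In the decreasing cases the condition is $\phi_{ij}(0) < w_i/w_j$; here I would invoke the elementary identity $\phi_{ij}(0)\,\phi_{ji}(0)=1$, immediate from the definition of $\phi$, to rewrite it as $w_j/w_i < \phi_{ji}(0)$, yielding the first and fourth bullets. The degenerate boundary $\beta=1$ is handled separately by noting that $f$ is then linear, so $\phi_{ij}\equiv p_i/p_j$ is constant and both thresholds collapse to the classical Smith condition, consistently with all four bullets.

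I would rely throughout on the standing assumption that no two jobs share a Smith ratio, which guarantees $w_i/w_j\neq p_i/p_j$ and $w_i/w_j\neq\phi_{ij}(0)$; this is exactly what lets me pass freely between the strict inequality in \eqref{eq:localphi} and the non-strict inequalities appearing in the statement, and in particular it disposes of the increasing case cleanly, where the supremum $p_i/p_j$ is never actually attained. The only genuinely non-routine ingredients are the observation that the supremum is governed entirely by the direction of monotonicity together with the limit, and the reciprocal identity $\phi_{ij}(0)=1/\phi_{ji}(0)$; once these are in place the four cases follow by direct substitution, so I do not anticipate any substantial obstacle beyond careful bookkeeping of the inequality directions.
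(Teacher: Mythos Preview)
Your approach is exactly what the paper intends: it gives no explicit proof of this lemma but states that the characterization follows from Lemmas~\ref{lem:monotone} and~\ref{lem:limit}, and that is precisely the combination you deploy, together with the elementary reciprocal identity $\phi_{ij}(0)\,\phi_{ji}(0)=1$.

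One minor slip in your bookkeeping: the standing assumption of distinct Smith ratios does yield $w_i/w_j\neq p_i/p_j$, but it does \emph{not} by itself guarantee $w_i/w_j\neq\phi_{ij}(0)$, so your justification for passing from the strict inequality $\phi_{ij}(0)<w_i/w_j$ to the non-strict $w_j/w_i\leq\phi_{ji}(0)$ in the decreasing cases is not quite complete. This boundary case is in fact a slight imprecision in the lemma as stated (if $w_i/w_j=\phi_{ij}(0)$ then \eqref{eq:localphi} fails at $t=0$ while the bullet condition holds with equality), rather than something your argument can repair from the stated assumptions alone.
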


\section{The global order property}

In this section we characterize the global order property of two jobs $i,j$ in the convex case $\beta>1$, and provide sufficient conditions on the concave case $0\leq\beta<1$.  Our contributions are summarized graphically in Figure~\ref{fig:contribution}.

\subsection{Global order property for $p_i\leq p_j$}
\label{sec:case1}

In this section we give the proof of the conjecture in case $i$ has processing time not larger than $j$. Intuitively this seems the easier case, as exchanging $i$ with $j$ in the schedule $AjBi$ makes jobs from $B$ complete earlier.  However the benefit of the exchange on these jobs cannot simply be ignored in the proof.  A simple example with $\beta = 2$ shows why this is so.  Let $i,j,k$ be 3 jobs with $p_i=4,w_i=1,p_j=8,w_j=1.5,p_k=1,w_k=0$.
Then $i\prec_\ell j$,
but exchanging $i,j$ in the schedule $jki$ increases the objective value, as
$F(ikj)=4^2 + 1.5 \cdot 13^2 = 269.5$ while $F(jki)=1.5\cdot 8^2 + 13^2=265$.
Now if we raise $w_k$ to $0.3$, then we obtain an interesting instance. It satisfies $F(jki)<F(jik)$ and $jki$ is the optimal schedule, but it cannot be shown with an exchange argument from $ikj$ without taking into account the gain on job $k$ during the exchange.

\begin{theorem}~\label{thm:lg-big}
The implication $i\prec_\ell j \Rightarrow i\prec_g j$ holds when $p_i\leq p_j$.
\end{theorem}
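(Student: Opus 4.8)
The plan is to argue by contradiction. Suppose some optimal schedule places $j$ before $i$, and write it as $A\,j\,B\,i\,C$, where $A$ is the prefix, $C$ the suffix, and $B$ the block scheduled strictly between $j$ and $i$. The natural move is the exchange sending this to $\hat S=A\,i\,B\,j\,C$. Two simplifications make it manageable: the jobs of $C$ keep exactly their completion times (the total work before $C$ is unchanged), so $C$ is inert; and the jobs of $A$ only contribute a fixed offset $t:=\sum_{k\in A}p_k$, which is harmless because the characterization \eqref{eq:localphi} quantifies over \emph{all} $t\ge0$. Thus it suffices to compare $S$ and $\hat S$ on $\{j\}\cup B\cup\{i\}$ with the head start $t$.

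Writing $T$ for the common makespan of these jobs and $G(\delta):=\sum_{k\in B}w_k f(\delta+b_k)$, where $b_k$ is the processing of $B$ up to and including $k$, the cost change decomposes as
\[ F(S)-F(\hat S)=D_{ij}+D_B,\qquad D_{ij}=w_i\big(f(T)-f(t+p_i)\big)-w_j\big(f(T)-f(t+p_j)\big), \]
with $D_B=G(t+p_j)-G(t+p_i)\ge0$. The gain $D_B$ is non-negative precisely because $p_i\le p_j$ lets every job of $B$ finish earlier after the exchange. If $B=\emptyset$ then $D_B=0$ and $T=t+p_i+p_j$, so $D_{ij}>0$ is exactly the adjacent-exchange inequality guaranteed by $i\prec_\ell j$ at offset $t$, and we are done. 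The delicate case is $B\neq\emptyset$: since $\big(f(T)-f(t+p_j)\big)/\big(f(T)-f(t+p_i)\big)$ increases in $T$ and already equals $\phi_{ij}(t)$ when $B$ is empty, the hypothesis $w_i/w_j>\phi_{ij}(t)$ does not force $D_{ij}>0$. This is exactly the phenomenon in the $w_k$-example: the benefit on $B$ cannot be ignored.

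To salvage the argument I would quantify $D_B$ using optimality of $S$ itself. Moving $i$ leftward to just after $j$, and moving $j$ rightward to just before $i$, are both non-improving; expressed through $G$, these give two inequalities that sandwich the weighted completion-time mass of $B$ (increments of $G$ over $[t+p_j,t+p_i+p_j]$ and over $[t,t+p_j]$) between the gain of $i$ and the loss of $j$. I would then combine them with the convexity of $G$ — each $f(\cdot+b_k)$ is convex for $\beta\ge1$ by Lemma~\ref{lem:log-concave} — with the local inequality \eqref{eq:localphi} read at the offsets $t$ and $t+\sum_{k\in B}p_k$, and with $p_i\le p_j$, to conclude $D_{ij}+D_B>0$ and contradict optimality. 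An equivalent and perhaps cleaner framing reads the two optimality conditions locally: the exchange at the right end of $B$ forces a local order between the last job of $B$ and $i$, the exchange at its left end forces one between $j$ and the first job of $B$, and chaining these boundary constraints around the cycle closed by $i\prec_\ell j$ should be self-contradictory. The concave range $0<\beta<1$ with $p_i\le p_j$ would follow the same outline, with the monotonicity facts of Lemmas~\ref{lem:wf} and \ref{lem:logcon} playing the role of convexity.

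The step I expect to be the main obstacle is this final combination. Each optimality inequality, on its own, points the ``wrong way'' — it only says the present schedule cannot be improved by a single move — so neither they nor $i\prec_\ell j$ in isolation can beat the possibly negative $D_{ij}$; the contradiction must emerge only after several of them are combined with the log-concavity structure of $f=t^\beta$ from Lemma~\ref{lem:log-concave} and the hypothesis $p_i\le p_j$. Pinning down the exact combination, and upgrading the final inequality to strict via the distinctness of processing times, weights and Smith ratios, is where the genuine work lies.
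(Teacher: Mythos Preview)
Your setup is right, and you have correctly isolated the difficulty: $D_{ij}$ can be negative when $B\neq\emptyset$, so the non-negative gain $D_B$ on $B$ must be quantified, not merely noted. But your proposal stops precisely there. You offer two routes --- combining the two optimality inequalities (moving $i$ left, moving $j$ right) with convexity of $G$, or chaining boundary local exchanges around a cycle --- yet carry out neither, and you explicitly concede that ``pinning down the exact combination \ldots\ is where the genuine work lies.'' As it stands this is a plan rather than a proof, and it is not evident that either route closes without a further idea.

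The paper supplies that idea, and it is both simpler and more general than what you sketch. Only \emph{one} of your two optimality moves is used, and only as a case split: if $F(AjBi)\ge F(AjiB)$ then the adjacent swap at offset~$t$ already gives $F(AjBi)>F(AijB)$ and we are done. Otherwise $F(AjiB)-F(AjBi)>0$, and one proves the \emph{scaled} inequality
\[
\min_{s\ge 0}\phi_{ji}(s)\;\bigl(F(AjiB)-F(AjBi)\bigr) \;\le\; F(AijB)-F(AiBj),
\]
which suffices because $p_i<p_j$ forces $\phi_{ji}>1$. The point of the scaling is that this inequality decomposes cleanly: for each job $h\in B$ it is literally the defining relation of $\phi_{ji}$ evaluated at the appropriate offset, and for the $i,j$ contribution (which carries a negative factor $f(a+p_i+p_j)-f(a+p_i+p_j+l_k)$) it is just $i\prec_\ell j$ rewritten as $w_j\le(\min_s\phi_{ji}(s))\,w_i$. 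No convexity, no log-concavity, no Lemma~\ref{lem:log-concave} or Lemma~\ref{lem:logcon} is invoked: only monotonicity of $f$ is used, so the argument covers every $\beta>0$ at once, and your proposed separate treatment of the concave range via Lemmas~\ref{lem:wf} and~\ref{lem:logcon} is unnecessary.
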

\begin{myproof}
The proof holds in fact for any increasing penalty function $f$.
Let $A,B$ be two arbitrary job sequences. We will show that the schedule $AjBi$ has strictly higher cost than one of the schedules $AijB, AiBj$.

First if $F(AjBi) \geq F(AjiB)$, then by $i\prec_\ell j$ we have even $F(AjBi)>F(AijB)$.
So for the remaining case we assume $F(AjBi)<F(AjiB)$ and will show $F(AjBi) > F(AiBj)$. By $i\prec_\ell j$ it would be enough to show the even stronger inequality
\begin{align*}
  F (A j i B) - F (A i j B)
&<
      F (A j B i) - F (A i B j),
\end{align*}
or equivalently
\begin{align*}
  F (A j i B) - F (A j B i)
  &<
      F (A i j B) - F (A i B j).
\end{align*}
The left hand side is positive by assumption, so it would be enough to show
\begin{align}
 \min_t \phi_{ji}(t) & \left( F (A j i B) - F (A j B i) \right) \notag
  \\
  <&
      ~F (A i j B) - F (A i B j),
  \label{eq:thm1:goal}
\end{align}
since $\phi_{ji}(t)>1$ by $p_i<p_j$.

We introduce the following notation.  Denote the jobs in $B$ by $1,\ldots,k$ and for every job $1\leq h\leq k$ denote by $l_h$ the total processing time of all jobs from $1$ to $h$.
We show the inequality, by analyzing separately the contribution of jobs $h\in B$, and of the jobs $i,j$.
By definition of $\phi_{ji}$ we have
\begin{align*}
&\phi_{ji}(l_h)\left( f(p_i+p_j+l_h) - f(p_j+l_h) \right)
\\=&
f(p_i+p_j+l_h)-f(p_i+l_h) ,
\end{align*}
which implies
\begin{align}
&\min_t \phi_{ji}(t) w_h \left( f(p_i+p_j+l_h) - f(p_j+l_h) \right) \notag
\\\leq&
w_h \left(f(p_i+p_j+l_h)-f(p_i+l_h) \right).\label{eq:thm1:B}
\end{align}
To analyze the contribution of jobs $i,j$ we observe that by $i\prec_\ell j$ we have
\( 
   w_j < \min \phi_{ji} w_i
\) 
which implies
\begin{align}
&\min_t \phi_{ji}(t) w_i \left( f(a+p_i+p_j)-f(a+p_i+p_j+l_k) \right) \notag
\\<&
w_j \left( f(a+p_i+p_j)-f(a+p_i+p_j+l_k) \right) .\label{eq:thm1:ij}
\end{align}

Summing up \eqref{eq:thm1:B} for every $1 \leq h \leq k$ and \eqref{eq:thm1:ij} yields \eqref{eq:thm1:goal} as required, and completes the proof.
\end{myproof}

\subsection{Global order property for $\beta>1$}

\begin{theorem}\label{thm:lg-nikhil}
The implication $i\prec_\ell j \Rightarrow i\prec_g j$ holds when $\beta \geq 1$.
\end{theorem}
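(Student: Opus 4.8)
The plan is to reduce to the single remaining case and then mirror the exchange argument of Theorem~\ref{thm:lg-big}, replacing its one elementary ingredient by the convexity lemmas of Section~\ref{sec:techlem}. Since the proof of Theorem~\ref{thm:lg-big} already establishes the implication for \emph{every} increasing $f$ whenever $p_i\le p_j$, and the case $\beta=1$ is Smith's classical rule, it suffices to treat $\beta>1$ together with $p_i>p_j$. In this regime Lemma~\ref{lem:local} gives the clean characterization $i\prec_\ell j \Leftrightarrow w_j/w_i\le p_j/p_i$, and by Lemmas~\ref{lem:monotone} and~\ref{lem:limit} the function $\phi_{ij}$ is strictly increasing with $\sup_t\phi_{ij}(t)=\lim_{t\to\infty}\phi_{ij}(t)=p_i/p_j$; equivalently, $i\prec_\ell j$ says exactly that $w_i\ge \phi_{ij}(t)\,w_j$ for all $t\ge 0$, the binding value being attained only in the limit.

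As in Theorem~\ref{thm:lg-big} I would fix arbitrary job sequences $A,B$ and argue that $F(AjBi)$ strictly exceeds the cost of some schedule placing $i$ before $j$, so that no optimal schedule can put $j$ before $i$. The novelty is that, since now $p_i>p_j$, advancing $i$ ahead of the block $B$ \emph{delays} every job of $B$, so the two candidates $AijB,AiBj$ used in Theorem~\ref{thm:lg-big} no longer cover all blocks: a heavy $B$ wants to be moved in front of the pair. I would therefore run the mirror dichotomy, comparing $F(AjBi)$ with $F(ABji)$ (asking whether the block prefers $j$ before or after it). If $F(AjBi)\ge F(ABji)$, then the adjacent swap together with $i\prec_\ell j$ gives $F(ABji)>F(ABij)$, so $ABij$ beats $AjBi$; this is the heavy-block case. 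Otherwise $F(AjBi)<F(ABji)$ (the light-block case), and it remains to show $F(AjBi)>F(AijB)$.

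The heart of the argument is the light-block case. Writing out both the hypothesis $F(AjBi)<F(ABji)$ and the goal $F(AjBi)>F(AijB)$ and cancelling the terms common to each comparison reduces both to inequalities between weighted sums of increments of $f$ over the block positions: the hypothesis controls increments of step $p_j$, whereas the goal involves increments of step $p_i$. In Theorem~\ref{thm:lg-big} the analogous mismatch was resolved by inflating one side by the single factor $\min_t\phi_{ji}(t)>1$, valid for any increasing $f$; here the relevant factor is governed by $\phi_{ij}$, whose supremum $p_i/p_j$ is attained only in the limit, so a position-uniform conversion between step-$p_j$ and step-$p_i$ increments is no longer elementary. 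This is precisely where convexity enters: Corollary~\ref{lem:q} provides the monotonicity of the ratio $q$ that lets me bound the step-$p_i$ increments against the step-$p_j$ increments position by position, and Lemma~\ref{inc} supplies the companion bound for the residual terms carrying the processing times $p_i,p_j$ (the factor $x$ in its statement), the two monotonicities combining so that the worst case is exactly the limiting ratio $p_i/p_j$, which is matched by the local-order hypothesis $w_i/w_j\ge p_i/p_j$. I expect this block bookkeeping---reconciling the arbitrary weights of $B$ whose completion times all shift under the exchange---to be the main obstacle, and it is exactly the point at which the general-$f$ proof of Theorem~\ref{thm:lg-big} breaks down and convexity becomes indispensable.
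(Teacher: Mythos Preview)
Your reduction to $\beta>1$ and $p_i>p_j$ is correct, and your instinct to mirror Theorem~\ref{thm:lg-big} is natural, but the mirror does not close as you describe, and the paper's actual proof is organized differently in a way that matters.

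The paper does \emph{not} use a single dichotomy on $F(AjBi)$ versus $F(ABji)$. Instead it argues by induction on $|B|$: for every prefix $1,\ldots,h$ of $B$ it asks whether sliding $j$ past that prefix (i.e.\ comparing $AjBi$ with $A(1\ldots h)j(h{+}1\ldots k)i$) already improves the schedule; if so the inductive hypothesis finishes, and if not one records the inequality
\[
  \sum_{h'=1}^{h} w_{h'}\,\delta^{j}_{h'} \;\le\; w_j\sum_{h'=1}^{h}\gamma^{j}_{h'}
  \qquad(1\le h\le k).
\]
Thus the paper enters the endgame with a \emph{family} of inequalities, one for every prefix level~$h$, whereas your dichotomy yields only the single case $h=k$. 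The paper then targets $AiBj$ (not $AijB$) and proves $F(AjBi)>F(AiBj)$ via a telescoping combination: setting $q_h=\delta^{i}_h/\delta^{j}_h-\delta^{i}_{h+1}/\delta^{j}_{h+1}\ge 0$ (non-negativity by Corollary~\ref{lem:q}), multiplying the level-$h$ inequality above by $q_h$, and summing over $h$ yields $\sum_h w_h(\delta^i_h-\delta^j_h)\le w_j\sum_h \gamma^j_h(\delta^i_h/\delta^j_h-1)$; Lemma~\ref{inc} then converts $\gamma^j_h\,\delta^i_h/\delta^j_h$ into $(p_i/p_j)\gamma^i_h$, and the local-order bound $w_i\ge (p_i/p_j)w_j$ finishes. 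This telescoping step genuinely uses \emph{every} level~$h$, not just $h=k$.

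Your proposal, by contrast, has only the aggregate bound $\sum_h w_h\delta^j_h<w_j\sum_h\gamma^j_h$ and then asserts a ``position by position'' conversion from step-$p_j$ increments to step-$p_i$ increments via Corollary~\ref{lem:q} and Lemma~\ref{inc}. But with a single linear constraint on the $w_h$ you cannot bound $\sum_h w_h\tilde\delta_h$ (for any position-dependent $\tilde\delta_h$) except by the crude $\max_h(\tilde\delta_h/\delta^j_h)$, and that worst case is not matched by $p_i/p_j$ in the way you suggest; the partial-sum inequalities at intermediate $h$ are exactly what allow the paper to weight positions differently via the $q_h$. In short: the missing idea is the induction on $|B|$ (producing the full family of prefix inequalities), and the correct target is $AiBj$ rather than $AijB$. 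Once you have both, the ``block bookkeeping'' you anticipated is the telescoping sum above.
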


\begin{myproof}
By Theorem~\ref{thm:lg-big} it suffices to consider the case $p_j<p_i$.
Assume $i\prec_\ell j$ and consider a schedule $S$ of the form $AjBi$ for some job sequences $A,B$.

The proof is by induction on the number of jobs in $B$. The base case $B=\emptyset$ follows from $i\prec_\ell j$.
For the induction step, we assume that $A'jB'i$ is suboptimal for all job sequences $A',B'$ where $B'$ has strictly less jobs than $B$.
Formally we denote $B$ as the job sequence $1,2,\ldots,k$ for some $k\geq 1$.
If for some $1\leq h \leq k$ we have
\begin{equation*}
     F(AjBi) \geq F(A(12\ldots h)j(h+1,\ldots,k)i),
\end{equation*}
then by induction we immediately obtain that $AjBi$ is suboptimal.
Therefore we assume
\begin{equation}
\label{eq:eq1}
\forall 1\leq h\leq k :   F(AjBi) < F(A(12\ldots h)j(h+1,\ldots,k)i).
\end{equation}

Then we show that $F(AjBi) > F(AiBj)$ to establish sub-optimality of $AjBi$.

For the remainder of the proof, we introduce the following notations.
We denote by $a$ the total processing time of $A$.  In addition we use $h$ and $h'$ to index the jobs in $B$, and denote by $l_h$ the total processing time of jobs $1,2,\ldots,h$, and by $b=a+l_k$ the total processing time of $AB$.
We also introduce the expressions
$$\delta^i_{h}:= f(a+p_i + l_h) - f(a+l_h)$$
and
$$ \gamma^i_{h} :=  f(a+p_i+l_h) - f(a+p_i+l_{h-1})$$
and define $\delta^j_{h}, \gamma^j_h$ analogously.

Equations \eqref{eq:eq1} imply that
\begin{align}
 & \sum_{h'=1}^h w_{h'} \big(f(a+p_j + l_{h'}) - f(a+l_{h'})\big) \notag
 \\ \leq &
 w_j  \big(f(a+p_j+l_h) - f(a+p_j)\big) \notag \\
      = &
      w_j \sum_{h'=1}^h \big(f(a+p_j+l_{h'}) - f(a+p_j+l_{h'-1})\big)
     \label{cond}
    \end{align}
where we use the convention that $l_0=0$.

We restate \eqref{cond} as follows:  For each $1 \leq h \leq k$,
\begin{equation}
  \label{cond2}
  w_1 \delta^j_1 + \ldots + w_h \delta^j_h \leq w_j (\gamma^j_{1} + \ldots + \gamma^j_h)
\end{equation}

For  $a < b$
define $$g(x) = x \frac{f(b+x) -f(x+a)}{f(b+x) - f(b)}.$$
By Lemma \ref{inc} $g$ is non-decreasing in $x$.

We need to show that $F(AiBj) < F(AjBi)$. As $p_i > p_j$ by case assumption, when we move from $AjBi$ to $AiBj$, the completion times of $j$ and the jobs in $B$ increase and that of $i$ decreases. Thus the statement is equivalent to showing that
\begin{align}	\label{goal}
&  \sum_{h=1}^k w_h \biggl(f(a + p_i +l_h) - f(a + p_j +l_h)\biggl) \notag \\
< & \phantom{-} w_i \big(f(a+p_j+p_i+l_k ) - f(a+p_i )\big) \notag
\\
& -  w_j \big(f(a+p_j +p_i +l_k) - f(a+p_j) \big)
\end{align}

Now, by assumption $i\prec_\ell j$, it holds that
\begin{align*}
& w_j \big(f(a+p_j+p_i+l_k) - f(a+p_j+l_k)\big)
\\
< & w_i \big(f(a+p_i+p_j+l_k) - f(a+p_i+l_k)\big),
\end{align*}
 thus to show \eqref{goal}  it suffices to show that
\begin{align}
&\sum_{h=1}^k w_h \big(f(a + p_i +l_h) - f(a + p_j +l_h)\big) \notag
\\\leq &
\phantom{-}w_i \big(f(a + p_i+ l_k) - f(a+p_i )\big)
\notag\\
&- w_j \big(f(a+p_j +l_k) - f(a+p_j) \big).
\label{goal2}
\end{align}

We reformulate \eqref{goal2} as
 $$\sum_{h=1}^k w_h (\delta^i_h - \delta^j_h) \leq w_i  \sum_{h=1}^k \gamma^i_h   - w_j \sum_h \gamma^j_h .$$

Since $w_i \geq w_j p_i/p_j$ by Lemma~\ref{lem:local}, it suffices to show that for every $1 \leq h \leq k$,
\begin{equation}
\label{finalcond}
\sum_{h=1}^k w_h (\delta^i_h - \delta^j_h)
\leq w_j   \sum_{h=1}^k \biggl( \frac{p_i}{p_j} \gamma^i_h - \gamma^j_h \biggr) .
\end{equation}

We define for every job $1\leq h\leq k$
$$q_h:= \frac{\delta^i_{h}}{\delta^j_h} - \frac{\delta^i_{h+1}}{\delta^j_{h+1}}$$
where we use the convention $\delta^i_{k+1}/\delta^j_{k+1}=1$.
Note that by Corollary~\ref{lem:q}, all $q_h$ are non-negative.

Multiplying for a given $h$, equation~\eqref{cond2} by $q_h$, and summing over all $1\leq h \leq k$
we obtain
$$   \sum_{h=1}^k   w_h  \delta^j_h (\sum_{h'\geq h} q_{h'}) \leq w_j \sum_{h=1}^k  \gamma^j_h (\sum_{h'\geq h} q_{h'}). $$

As the sum over $q_{h'}$ telescopes, we can rewrite the above as
$$ \sum_{h=1}^k w_h (\delta^i_h - \delta^j_h) \leq  w_j \sum_{h=1}^k \gamma^j_h (\delta^i_h/\delta^j_h -1)  $$

Thus to prove \eqref{finalcond}, it suffices to show that
$$  \gamma^j_h (\delta^i_h/\delta^j_h -1) \leq  (p_i/p_j) \gamma^i_h -\gamma^j_h $$
or equivalently,
$$ \delta^i_h / \delta^j_h \leq (p_i /p_j) (\gamma^i_h /\gamma^j_h)$$

But this is exactly what Lemma~\ref{inc} gives us. In particular,
\[
\gamma^j_h/\delta^j_h p_j \leq p_i \gamma^i_h/\delta^i_h
\]
is equivalent to
\begin{align*}
&
p_j \frac{ f(a+p_j+l_h) - f(a+p_j+l_{h-1}) }{  f(a+p_j + l_h) - f(a+l_h)}
\\
\leq&
p_i \frac{ f(a+p_i+l_h) - f(a+p_i+l_{h-1}) }{  f(a+p_i + l_h) - f(a+l_h)}
\end{align*}
which follows Lemma~\ref{inc} since $p_{i} \geq p_{j}$.
Therefore, the theorem holds.
\end{myproof}

\subsection{Global order property for $0<\beta<1$ and $p_j\leq p_i$}

\begin{theorem}~\label{thm:lg-oscar}
The implication $i\prec_\ell j \Rightarrow i\prec_g j$ holds when $p_j\leq p_i$, $w_i/w_j \geq (p_i/p_j)^{2-\beta}$ and $0<\beta<1$.
\end{theorem}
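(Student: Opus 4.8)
The plan is to reuse the scaffolding of the proof of Theorem~\ref{thm:lg-nikhil}, since we are again in the case $p_j<p_i$, and to replace only the final telescoping step by an argument tailored to the concave regime. First I would fix $i\prec_\ell j$ and a schedule $AjBi$, and argue by induction on the number of jobs in $B$, the base case $B=\emptyset$ being exactly $i\prec_\ell j$. For the induction step I would invoke the same dichotomy as before: either \eqref{eq:eq1} fails for some prefix of $B$, in which case induction closes the case, or the family of inequalities \eqref{cond2} holds for every $1\le h\le k$. Under the latter assumption the goal reduces, verbatim as in Theorem~\ref{thm:lg-nikhil} (this reduction only uses that $f$ is increasing together with $i\prec_\ell j$ through \eqref{eq:localphi}), to establishing \eqref{goal2}, i.e. $\sum_{h=1}^k w_h(\delta^i_h-\delta^j_h)\le w_i\sum_{h=1}^k\gamma^i_h-w_j\sum_{h=1}^k\gamma^j_h$.

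The novelty lies in how \eqref{goal2} is proved for $0<\beta<1$. Writing $r_h:=\delta^i_h/\delta^j_h$, Corollary~\ref{lem:q} now tells us that (since $\beta<1$ and $p_i>p_j$) the ratio $r_h$ is \emph{increasing} in $h$, whereas for $\beta>1$ it was decreasing; moreover $r_h>1$ because $p_i>p_j$. Consequently I would bound $\sum_h w_h(\delta^i_h-\delta^j_h)=\sum_h w_h\delta^j_h(r_h-1)\le (r_k-1)\sum_h w_h\delta^j_h$, and then feed in only the single inequality \eqref{cond2} for $h=k$, namely $\sum_h w_h\delta^j_h\le w_j\sum_h\gamma^j_h$, to obtain $\sum_h w_h(\delta^i_h-\delta^j_h)\le (r_k-1)\,w_j\sum_h\gamma^j_h$. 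Hence it suffices to prove $r_k\,w_j\sum_h\gamma^j_h\le w_i\sum_h\gamma^i_h$, and after inserting the hypothesis $w_i\ge w_j(p_i/p_j)^{2-\beta}$ this becomes the purely analytic claim $r_k\sum_h\gamma^j_h\le (p_i/p_j)^{2-\beta}\sum_h\gamma^i_h$.

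The last claim is where the two concave lemmas combine. Applying Lemma~\ref{lem:wf} with $b=a+l_k$ gives $\sum_h\gamma^i_h=f(a+p_i+l_k)-f(a+p_i)\ge (p_i/p_j)^{\beta-1}\big(f(a+p_j+l_k)-f(a+p_j)\big)=(p_i/p_j)^{\beta-1}\sum_h\gamma^j_h$, so that $(p_i/p_j)^{2-\beta}\sum_h\gamma^i_h\ge (p_i/p_j)\sum_h\gamma^j_h$; note that the exponent $2-\beta$ in the hypothesis is chosen exactly so that it cancels the exponent $\beta-1$ produced by Lemma~\ref{lem:wf} and leaves the clean factor $p_i/p_j$. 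It then remains to check $r_k\le p_i/p_j$. Since $r_k=1/q(a+l_k)$ with $q$ decreasing (Corollary~\ref{lem:q}) and $\lim_{t\to\infty}q(t)=p_j/p_i$ (by the mean value argument of Lemma~\ref{lem:limit}), we have $q(a+l_k)>p_j/p_i$ and hence $r_k<p_i/p_j$, which finishes the chain.

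The main obstacle, and the reason the proof of Theorem~\ref{thm:lg-nikhil} cannot simply be transcribed, is precisely this reversal of monotonicity: for $\beta>1$ the telescoping coefficients $q_h=\delta^i_h/\delta^j_h-\delta^i_{h+1}/\delta^j_{h+1}$ are non-negative, so one may multiply each constraint \eqref{cond2} by $q_h$ and sum while preserving the inequality, but for $0<\beta<1$ these coefficients have the wrong sign and that summation breaks. The fix I propose is to abandon the per-level telescoping and retain only the strongest constraint (the case $h=k$) together with the monotone bound $r_h\le r_k$; a short Abel-summation check shows this bound is in fact tight (the extremal configuration puts all the weight of $B$ on its last job), so nothing is lost in the reduction itself. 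The price is exactly the extra hypothesis $w_i/w_j\ge (p_i/p_j)^{2-\beta}$, which is why in the concave regime the condition ends up being only sufficient rather than necessary.
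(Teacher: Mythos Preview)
Your proof is correct and, underneath the different packaging, coincides with the paper's argument. Both proofs run the same induction, assume \eqref{eq:eq1}, and reduce to the identical inequality (your \eqref{goal2} is exactly the paper's \eqref{eq:local:global} rewritten); both then use only the single constraint \eqref{cond2} at $h=k$, the bound $\delta^i_h/\delta^j_h<p_i/p_j$ coming from Corollary~\ref{lem:q} and the limit $p_j/p_i$, Lemma~\ref{lem:wf} applied with endpoints $a$ and $a+l_k$, and the hypothesis $w_i/w_j\ge(p_i/p_j)^{2-\beta}$ to close. The only cosmetic difference is that the paper multiplies the positive quantity $w_j\sum_h\gamma^j_h-\sum_h w_h\delta^j_h$ by the ratio $\frac{w_i}{w_j}\frac{\sum_h\gamma^i_h}{\sum_h\gamma^j_h}\ge1$ and then compares, whereas you bound $r_h\le r_k$ termwise before invoking the $h=k$ constraint; algebraically these are the same estimate.
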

\begin{myproof}
The proof is along the same lines as the previous one.  Hence we assume \eqref{eq:eq1} and need to show $F(AjBi)>F(AiBj)$,
and for this purpose show
\[
        F (A B j i) - F (A B i j) < F (A j B i) - F (A i B j),
\]
where the left hand side is positive by $i\prec_\ell j$.
Equivalently we have to show
\begin{equation}
 F (A B ji) - F (A j Bi) < F (A B ij) - F (A i Bj) . \label{eq:local:global}
\end{equation}

First we claim that
\begin{equation} \label{eq:wp1b}
  \frac{w_i}{w_j} > \left( \frac{p_i}{p_j} \right)^{1-\beta}.
\end{equation}
Indeed, from Lemma~\ref{lem:local} we have $w_j/w_i\leq \phi_{ji}(0)$.
By Lemma~\ref{lem:monotone}, the function $\phi_{ji}(t)$ is increasing, and
by Lemma ~\ref{lem:limit}, it is upper bounded by $p_j/p_i$. Hence $w_j/w_i \leq p_j/p_i$, and for $0<\beta<1$ and $p_j\leq p_i$ inequality~\eqref{eq:wp1b} follows.

Therefore by Lemma~\ref{lem:wf} we have
\begin{align}
\frac{w_i}{w_j} \frac{f(b+p_i) -f(a+p_i)}{f(b+p_j) -f(a+p_j)}
&>\notag\\
 \left( \frac{p_i}{p_j} \right)^{1-\beta} \frac{f(b+p_i) -f(a+p_i)}{f(b+p_j) -f(a+p_j)} &\geq 1.
 \label{eq:wiwj_pipj_1b}
 \end{align}

For convenience we denote
\[
\Delta(t) := \sum_{h \in B} w_h (f (a+l_h + t) - f(a+l_h)).
\]
We have
  \begin{align*}
    0<&F (A B j i) - F (A j B i) \\
= & w_j (f (b + p_j) - f (a + p_j)) - \Delta(p_j)\nonumber\\
     <& \phantom{\cdot} \frac{w_i}{w_j}\frac{f (b + p_i) - f (a + p_i)}{f (b + p_j) - f (a + p_j)} \cdot
   \notag\\& \cdot \big(w_j (f (b + p_j) - f (a + p_j)) - \Delta(p_j) \big)                 \nonumber\\
     =& \phantom{-} w_i \big(f (b + p_i) - f (a + p_i)\big)
     \notag\\& - \frac{w_i}{w_j} \frac{f
    (b + p_i) - f (a + p_i)}{f (b + p_j) - f (a + p_j)} \Delta(p_j) \nonumber\\
     < & w_i \big(f (b + p_i) - f (a + p_i)\big)
     \notag\\&-
      \frac{w_i}{w_j} \frac{f (b + p_i) - f (a + p_i)}{f (b +    p_j) - f (a + p_j)} \min_{t \geq 0} \frac{f (t + p_j) - f (t)}{f (t +    p_i) - f (t)} \Delta(p_i).
  \end{align*}
  The first inequality follows from assumption~\eqref{eq:eq1} with $h=k$.
  The second inequality follows from \eqref{eq:wiwj_pipj_1b}.
  The third inequality holds since $f(t+p_j) < f(t+p_i)$ for all $t\geq 0$.

In order to upper bound the latter expression by
  \begin{align*}
      &w_i (f (b + p_i) - f (a + p_i)) - \Delta(p_i)
      \\= & F (A B i j) - F (A i B j)
  \end{align*}
  as required, it suffices to show
  \begin{align*}
    \frac{w_i}{w_j}\frac{f (b + p_i) - f (a + p_i)}{f (b + p_j) - f
    (a + p_j)} \min_{t \geq 0} \frac{f (t + p_j) - f (t)}{f (t + p_i) - f
    (t)} \geq& 1.
  \end{align*}
By $0<\beta<1$ and Corollary~\ref{lem:q} the fraction
  $\frac{f (t + p_j) - f (t)}{f (t + p_i) - f
    (t)}$ is decreasing, and its limit  when $t \rightarrow \infty$ is $p_j / p_i$, by the same analysis as in the proof of Lemma~\ref{lem:limit}. Therefore $\min_{t \geq 0} \frac{f (t + p_j) - f (t)}{f (t + p_i) - f
    (t)}  \geq \frac{p_j}{p_i}$. Hence
  \begin{align*}
    &\frac{w_i}{w_j} \frac{f (b + p_i) - f (a + p_i)}{f (b + p_j)
    - f (a + p_j)} \min_{t \geq 0} \frac{f (t + p_j) - f (t)}{f (t + p_i)
    - f (t)}
    \\
    \geq &
    \frac{w_i}{w_j} \frac{f (b + p_i) - f (a + p_i)}{f (b + p_j)
    - f (a + p_j)}
    \frac{p_j}{p_i}
    \\
    \geq &
\left(    \frac{p_i}{p_j} \right)^{2-\beta} \frac{f (b + p_i) - f (a + p_i)}{f (b + p_j)
    - f (a + p_j)}
    \frac{p_j}{p_i}
    \\
    =&
\left(    \frac{p_i}{p_j} \right)^{1-\beta} \frac{f (b + p_i) - f (a + p_i)}{f (b + p_j)
    - f (a + p_j)}
    \geq 1.
  \end{align*}
  where the second inequality follows from the theorem hypothesis and the last inequality from  Lemma~\ref{lem:wf}. This concludes the proof
  of the theorem. 
\end{myproof}

\section{Generalization}
\label{sec:generalization}

We can provide some technical generalizations of the aforementioned theorems.  For any pair of jobs $i,j$, and job sequence $T$ of total length $t$, we denote by $i\prec_{\ell(t)}j$ the property $F(Tij) < F(Tji)$.  Now suppose that none of  $i\prec_\ell j$ or $j\prec_\ell j$ holds, and say $p_i > p_j$ and $\beta>1$. Then from Lemma~\ref{lem:monotone} it follows that there exist a unique time $t$, such that for all $t' < t$  we have $i\prec_{\ell(t')}j$ and for all $t' > t$ we have $j\prec_{\ell(t')}i$.  These properties are denoted respectively by $i\prec_{\ell[0,t)} j$ and  $j\prec_{\ell(t,\infty)} j$.  In case $p_i < p_j, \beta>1$ or $p_i > p_j, 0<\beta<1$, we have the symmetric situation $j\prec_{\ell[0,t)} i$ and $i\prec_{\ell(t,\infty)} j$.

This notation can be extended also to the global order property.  If for every job sequences $A,B$ with $A$ having total length at least $t$ we have $F(AiBj) < F(AjBi)$, then we say that $i,j$ satisfy the global order property in the interval $(t,\infty)$ and denote it by $i\prec_{g(t,\infty)}j$.  The property $i\prec_{g[0,t)}j$ is defined similarly for job sequences $A,B$ of total length at most $t$.

The proof of Theorem~\ref{thm:lg-nikhil} actually shows the stronger statement: if $\beta>1$ and $p_i\geq p_j$, then $j\prec_{\ell(t,\infty)} i$ implies $j\prec_{g(t,\infty)} i$.  The same implication does not hold for interval $[0,t)$, as shown by the following counter example.
It
consists      of    a   3-job   instance     for      $\beta=2$     with
$p_i=13,w_i=7,p_j=8,w_j=5,p_k=1,w_k=1$.    For   $t=19/18$,  we   have
$i\prec_{\ell[0,t)}j$ and  $j\prec_{\ell(t,\infty)}i$.  But the unique
optimal solution  is the  sequence $jki$, meaning  that we  do not have
$i\prec_{g[0,t)}j$.

These generalizations can be summarized as follows.
\\
\begin{center}
\begin{tabular}{|c|c|c|}
\hline
& $p_i\leq p_j$ & $p_i\geq p_j$
\\
& $0<\beta<1$ & $\beta>1$
\\ \hline
$i \prec_{\ell[0,t)} j \Rightarrow i \prec_{g[0,t)} j$ & Yes & No
\\ \hline
$j \prec_{\ell(t,\infty)} i \Rightarrow j \prec_{g(t,\infty)} i$ & Open & Yes
\\ \hline
\end{tabular}
\end{center}

\section{Experimental study}
\label{sec:experimental}
We  conclude this  paper with  an experimental  study,  evaluating the
impact of the proposed rules on the performance of a search procedure.
The experiments are based on a C++ program executed on a GNU/Linux machine with 3 Intel Xeon processors, each with 4 cores, running at 2.6Ghz and 32Gb RAM. In order to be independent on the machine environment, we measured the number of generated search nodes rather than running time.
Hence we use a timeout which is not expressed in seconds, but in time units corresponding to the processing of a search node by the program.
Note that we use the rules that we have proved (not the ones in the conjecture).
Following              the              approach             described
in~\citep{HohnJacobs:12:Experimental-analytical-quadratic-penalty-function},   we  consider
the Algorithm  A* by \citet{HartNilssonBertram:72:Proposition-A-star-algorithm}.

The search  space is the
directed     acyclic    graph     consisting     of    all     subsets
$S\subseteq\{1,\ldots,n\}$. Note  that the potential  search space has
size $2^n$ which is already less  than the space of the $n!$ different
schedules.  In  this graph  for every  vertex $S$ there  is an  arc to
$S\backslash\{j\}$ for any  $j\in S$. It is labeled  with $j$, and has
cost  $w_j t^\beta$ for  $t=\sum_{i\in S}  p_i$.  Every  directed path
from the root $\{1,\ldots,n\}$ to the target $\{\}$ corresponds to a schedule of
an objective value being the total arc cost.

The algorithm A* finds a shortest path from the root to the target vertex, and as Dijkstra's algorithm uses a priority queue to select the next vertex to explore.  But the difference of A* is that it uses as weight for vertex $u$ not only the distance from the source to $u$, but also a lower bound on the distance from $u$ to the destination.
A set $S$ is maintained containing all vertices $u$ for which a shortest path has already been discovered.  Initially $S=\{s\}$ for the root vertex $s$.  In Dijkstra's algorithm the priority queue contains all remaining vertices $v$, with the priority $\min_{u\in S} d(s,u)+w(u,v)$, where $w(u,v)$ is the weight of the arc $(u,v)$.  However in the algorithm A* this priority is replaced by $\min_{u\in S} d(s,u)+w(u,v) + h(v)$, where $h$ is some lower bound on the distance from $v$ to the target.  This function should satisfy $h(v)=0$ if $v$ is the target and $h(u) \leq w(u,v)+h(v)$ for every arc $(u,v)$. The function $h$ used in our experiments satisfies these properties.


Pruning is  done when constructing the  list of outgoing  arcs at some
vertex $S$.  Potentially  every job $j\in S$ can  generate an arc, but
order properties might prevent that.   Let $i$ be the label of the
arc   leading  to   $S$  (assuming   $S$  is   not  the   root).   Let
$t_1=\sum_{k\in S} p_k$.  We distinguish two kind of pruning rules:
\begin{description}
    \item[\textbf{Arc pruning.}] The arc from $S$ to $S\setminus\{j\}$ for $j\in S$ is pruned if $i\prec_{\ell(t_1-p_j)}j$, because placing job $j$ adjacent to $i$ at this position would be suboptimal.
\item[\textbf{Vertex pruning.}] All arcs leaving vertex $S$ are pruned, if there is a
job $j\in  S$ with $i \prec_{g[0,t_1]} j$, as again placing job $j$ somewhere before job $i$  would be suboptimal.
\end{description}
In the lack of a complete characterization of the global precedence relation, we have to weaken the vertex pruning rule by replacing $i \prec_{g[0,t_1]} j$ with a condition implying $i\prec_g j$.   These would be the Sen-Dileepan-Ruparel condition in general or for $\beta=2$ the Mondal-Sen-Höhn-Jacobs conditions.  Our pruning rules consist of using our conditions for global precedence.

In a search tree  such an arc  pruning would cut  the whole subtree  attached to
that arc,  but in a directed acyclic graph (DAG) the improvement  is not so
significant. As the typical in-degree of a vertex is linear in $n$, a linear number of
arc-cut is necessary to remove a vertex from the DAG.

Figure~\ref{fig:dag} illustrates the DAG  explored by A* for $\beta=2$
on the instance consisting  of the following (processing time, priority
weight) pairs  :
\[
(10,5),(10,5), (11,3),(13,6),(8,4),(12,6).
\]
Arcs  are labeled  with their cost.  The last two arcs have the
same weight, as the lower bound on single job sets is tight.
\begin{figure}[htb]
  \centering
  \includegraphics[width=0.5\textwidth]{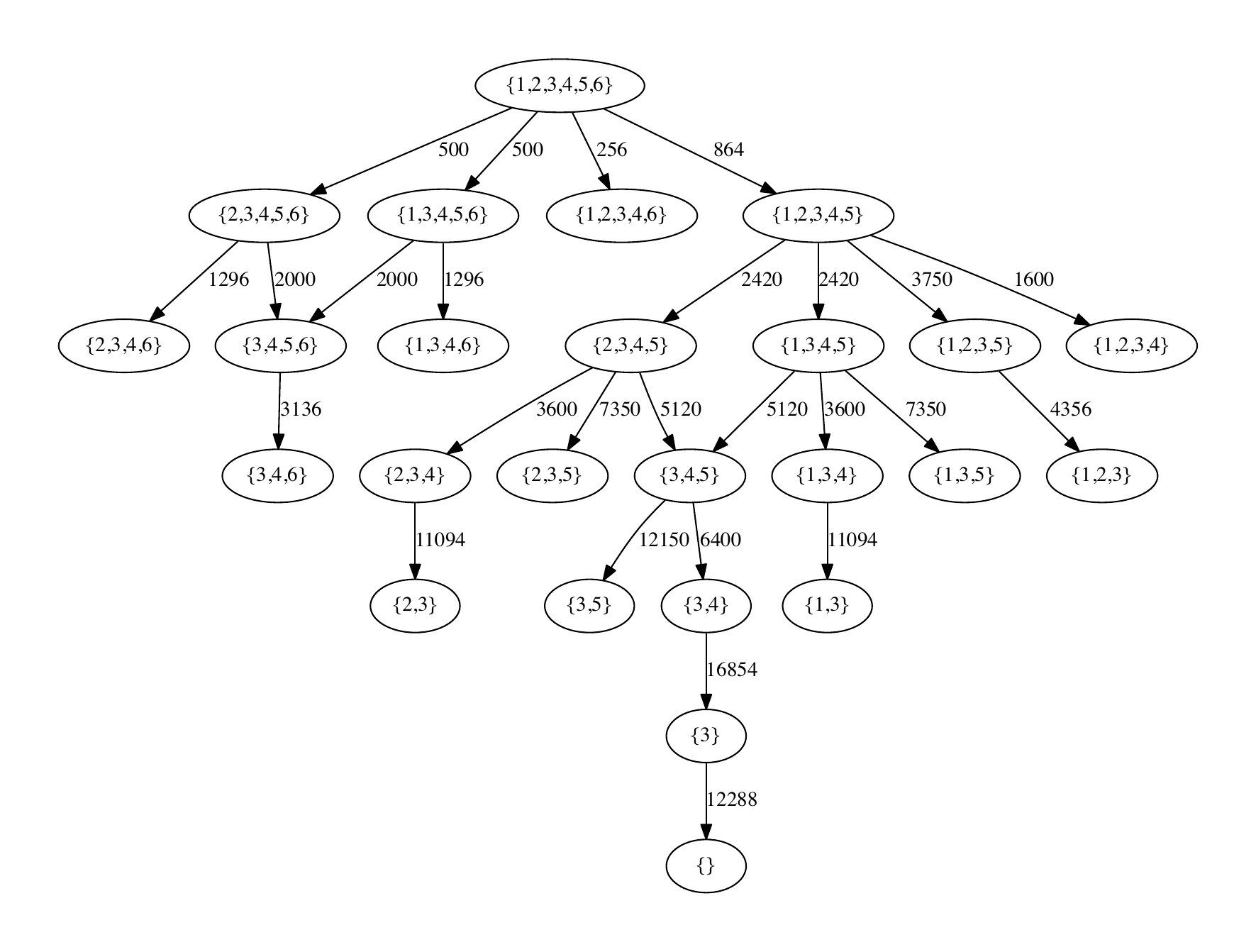}
  \caption{Example of the portion of the search graph explored by the algorithm A*.}
  \label{fig:dag}
\end{figure}

A  simple  additional pruning  could be  done  when  remaining jobs  to  be scheduled form a \emph{trivial} subinstance.  By this we mean that all pairs  of jobs  $i,j$ from  this subinstance  are comparable  with the order  $\prec_{\ell[0,t_1]}$.  In  that  case the local order  is actually  a total order,  which describes in  a simple manner the  optimal schedule  for this subinstance.   In that  case we could simply generate a  single path from the node $S$  to the target vertex $\{\}$.  However experiments showed that detecting this situation is too costly compared with the benefit we could gain from this pruning rule.

\subsection{Random instances}
\label{subsec:gener}

We adopt the model of random instances described by Höhn and Jacobs.
Most previous  experimental results were made  by generating processing
times and  weights uniformly from  some interval, which leads  to easy
instances, since  any job pair $i,j$ satisfies  with probability $1/2$
the Sen-Dileepan-Ruparel condition,  i.e.\  $i\prec_g  j$  or  $j\prec_g  i$.   As  an
alternative, \citet{HohnJacobs:12:Experimental-analytical-quadratic-penalty-function}  proposed a random model, where the Smith-ratio  of a job is selected according to
$2^{N(0,\sigma^2)}$ with $N$ being the normal distribution centered at
$0$ with  variance $\sigma$.  Therefore for  $\beta=2$ the probability
that  two jobs  satisfy the Mondal-Sen-Höhn-Jacobs-2 condition depends on $\sigma$,
as it compares the Smith-ratio among the jobs.


We adopted their  model for other values of  $\beta$ as follows.  When
$\beta>1$,  the  condition  for  $i\prec_g  j$ of  our  conditions
can  be approximated, when $p_j/p_i$ tends to infinity, by the relation $w_i/p_i
\geq  \beta  w_j/p_j$.   Therefore   in  order  to  obtain  a  similar
``hardness'' of  the random instances for the  same parameter $\sigma$
for different values of $\beta>1$, we choose the Smith-ratio according
to  $2^{N(0,\beta^2\sigma^2)}$.   This   way  the  ratio  between  the
Smith-ratios of  two jobs is  a random variable from  the distribution
$2^{2N(0,\beta^2\sigma^2)}$, and the probability that this value is at
least $\beta$ depends only on $\sigma$.

However when  $\beta$ is  between $0$ and  $1$, the  our condition for
$i\prec_g j$ of  our rule can be approximated  when $p_j/p_i$ tends to
infinity by  the relation $w_i/p_i  \geq 2 w_j/p_j$, and  therefore we
choose   the    Smith-ratio   of    the   jobs   according    to   the
$\beta$-independent distribution $2^{N(0,4\sigma^2 )}$.

The instances  of our  main test sets  are generated as  follows.  For
each      choice      of      $\sigma\in\{0.1,0.2,\ldots,1\}$      and
$\beta\in\{0.5,0.8,1.1,\ldots,3.2\}$, we  generated $25$ instances of
$20$  jobs  each.  The  processing  time  of  every job  is  uniformly
generated  in  $\{1,2,\ldots,100\}$.   Then  the weight  is  generated
according to the above  described distribution.  Note that the problem
is independent from scaling of  processing time or  weights, motivating
the arbitrary choice of the constant $100$.

\subsection{Hardness of instances}
\label{subsec:hardness}

As a measure of the hardness  of instances, we consider the portion of
job  pairs $i,j$  which  satisfy global
precedence.  By this we mean that we have either $i\prec_{g[0,t_1]} j$
or $j  \prec_{g[0,t_1]} i$  for $t_1$ being  the total  processing time
over all jobs excepting jobs $i,j$. Figure~\ref{fig:global} shows this
measure for various choices of $\beta$.

The results depicted in  Figure~\ref{fig:global} confirm the choice of
the model  of random instances.  Indeed the hardness of  the instances
seems to  depend only  little on $\beta$,  except for  $\beta=2$ where
particularly strong precedence rules have been established.  In addition
the impact  of our new  rules is significant, and  further experiments
show how  this improvement influences  the number of  generated nodes,
and therefore the running time.  Moreover it is quite visible from the
measures that the instances are  more difficult to solve when they are
generated with a small $\sigma$ value.

\subsection{Comparison between forward and backward approaches}
\label{subsec:forward-backward}

In  this section,  we consider  a  variant of the
algorithm. The algorithm described so far is called the \emph{backward} approach, and the variant is called the \emph{forward}  approach. Here a  partial  schedule
describes  a  prefix of  length  $t$ of  a  complete  schedule and  is
extended to  its right along an edge  of the search tree,  and in this
variant the  basic lower bound  is $h(S) := \sum_{i\in S}  w_i (t+p_i)^\beta$.
However  in  the  \emph{backward}  approach, a  partial  schedule  $S$
describes a suffix of a complete schedule and is extended to its left.
For this variant, we choose $h(S) := \sum_{i\in S}  w_i p_i^\beta$.
\citet{KaindlKainzRadda:01:Asymmetry-in-search} give
experimental  evidence that  the backward  variant generates  for some
problems less  nodes in the search  tree, and this fact  has also been
observed           by  \citet{HohnJacobs:12:Experimental-analytical-quadratic-penalty-function}.

We conducted an experimental study  in order to find out which variant
is  most  likely to  be  more efficient.   The  results  are shown  in
Figure~\ref{fig:direction}.   The values  are  most significative  for
small $\sigma$ values,  since for large values the  instances are easy
anyway  and the  choice of  the variant  is not  very  important.  The
results indicate that \emph{without} our  rules the forward variant should be
used only when $\beta<1$ or $\beta=2$, while \emph{with} our rules the forward
variant should be used only when $\beta>1$.

Later on, when we measured the impact of our rules in the subsequent experiments,
we compared  the behavior  of the algorithm  using the  most favorable
variant dependent on the value of $\beta$ as described above.

\subsection{Timeout}
\label{subsec:timeout}

During  the resolution  a timeout  was set,  aborting  executions that
needed more than a million nodes.  In Figure~\ref{fig:solved} we show
the fraction  of instances  that could be  solved within  the limited
number of nodes.  From these experiments we measure the instance sizes
that  can be efficiently  solved, and  observe that  this limit  is of
course  smaller  when  $\sigma$  is  small, as  the  instances  become
harder. But  we also  observe that  with the usage  of our  rules much
larger instances can be solved.

When $\beta$ is close to $1$,  and instances consist of jobs of almost
equal Smith-ratio,  the different  schedules diverge only  slightly in
cost, and  intuitively one has to  develop a schedule  prefix close to
the  makespan, in order  to find  out that  it cannot lead  to the
optimum.   However for  $\beta=2$,
the Mondal-Sen-H\"ohn-Jacobs  conditions
make  the  instances easier  to solve  than  for  other values  of
$\beta$, even  close to $2$.  Note  that we had  to consider different
instance sizes,  in order  to obtain comparable  results, as  with our
rules all 20 job instances could be solved.

\begin{figure}[ht]
  \centering
  \includegraphics[width=8cm]{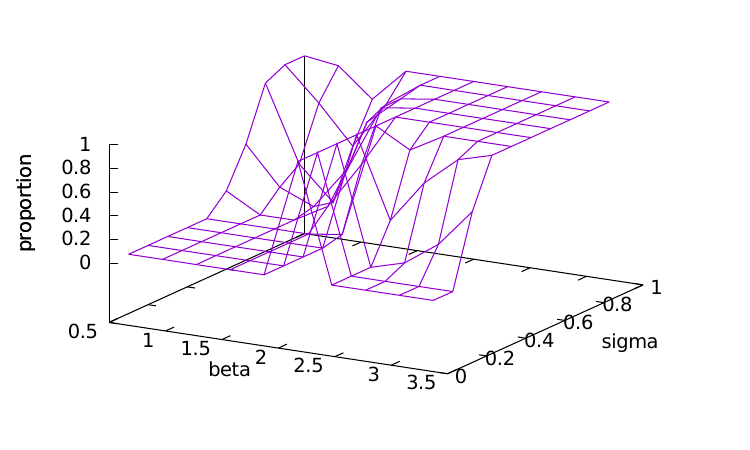}
  \includegraphics[width=8cm]{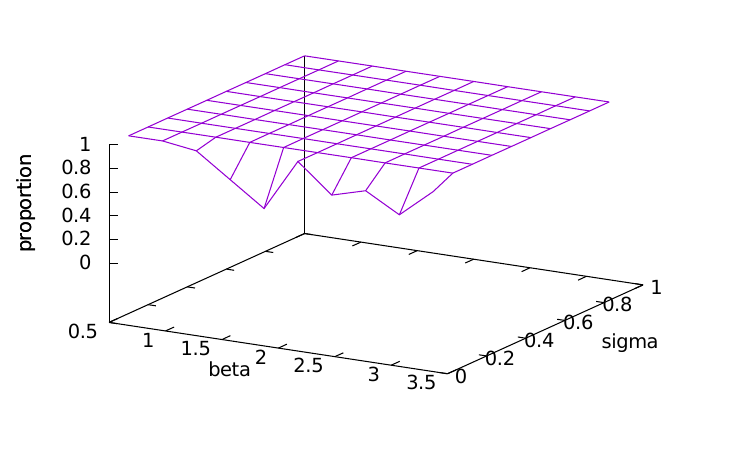}
  \caption{Proportion of  instances which  could be solved  within the
    imposed time  limit of a  million nodes, with (below)  and without
    (above) the new rules. For every $(\beta,\sigma)$, the evaluation is done on 25 instances each consists of 40 random jobs.}
  \label{fig:solved}
\end{figure}

\subsection{Improvement factor}
\label{subsec:improv}

In  this section  we  measure the  influence  on the  number of  nodes
generated during a resolution when  our rules are used.  For $\beta=2$
we  compare our  performance with  the  Mondal-Sen-H\"ohn-Jacobs conditions
defined in \citep{MondalSen:00:Conjecture-order-constraint-quadratic-penalty-function}
and proved in \citep{HohnJacobs:12:Experimental-analytical-quadratic-penalty-function},
while   for   other   values   of   $\beta$  we   compare   with   the
Sen-Dileepan-Ruparel condition defined in \citep{SenDileepan:90:Order-constraint-generalized-quadratic-penalty-function}.
For fairness we  excluded instances where
the   timeout   was   reached   without   the  use   of   our   rules.
Figure~\ref{fig:ratio} shows  the ratio between the  average number of
generated nodes when the algorithm is  run with our rules, and when it
is  run  without  our  rules.   Clearly this  factor  is  smaller  for
$\beta=2$, since the Mondal-Sen-H\"ohn-Jacobs conditions apply here.

We  observe that  the improvement  factor is  more significant  for hard
instances, i.e.\  when $\sigma$ is  small.  From the figures  it seems
that this behavior is not monotone, for $\beta=1.1$ the factor is less
important with  $\sigma=0.1$ than with $\sigma=0.3$.   However this is
an  artifact of our  pessimistic measurements,  since we  average only
over instances which could be solved within the time limit, so in the
statistics we filtered out the really hard instances.

\section{Performance measurements for $\beta=2$}
\label{sec:performance_beta_2}

For $\beta=2$, \citet{HohnJacobs:12:Experimental-analytical-quadratic-penalty-function}
provide  several test sets  to measure  the impact  of their  rules in
different variants,  see \citet{Hohn:12:Library-quadratic-penalty-function}.  For  completeness we
selected two data sets from their collection to compare our rules with
theirs.

The first set called \texttt{set-n}  contains for every number of jobs
$n=1,2,\ldots,35$, 10 instances generated with parameter $\sigma=0.5$.
This test set permits to measure the  impact of our rules as a function of
the instance size.

The second  test set that  we considered is called  \texttt{set-T} and
contains  3  instances   of   25  jobs for  every parameter
\[
\sigma=0.100,  0.101, 0.102,  \ldots, 1.000.
\]

Results   are  depicted in Figure~\ref{fig:library}, and show an improvement in the range of one order of magnitude.

\section{Performance depending on input size}
\label{sec:performance_inputsize}

In addition we  show the performance of the  algorithm with our rules,
in dependence on the  number of jobs. Figure~\ref{fig:nodesByN2} shows
for different  number of jobs  the number of generated  nodes averaged
over 100 instances generated with different $\sigma$ parameters,
exposing  an  expected running  time  which  strongly  depends on  the
hardness of the instances.

\section{Conclusion}

We formulated the local global conjecture for the single machine scheduling problem of minimizing $w_j C_j^\beta$ for any positive constant $\beta$. We proved it for $\beta\geq 1$ substantially extending and improving over previous partial results.
We also show some partial results for the remaining case $0<\beta<1$.

We conducted experiments and measured the impact of our conditions on the running time (number of generated nodes) by an A* based exact resolution.  Improvements by a factor up to 1e4 have been observed.

Based on extensive experiments we believe that the conjecture should also hold in this case. However,
it seems to be substantially more complicated and new analytical techniques seem to be necessary.
We also describe a more general class of functions for which our results hold.
Determining the class of objective functions for which the local global conjecture holds would also be a very interesting direction to explore.

\paragraph{Acknowledgements}
  We are grateful to the anonymous referees who spotted errors in previous versions of this paper. This paper was supported by the PHC Van Gogh grant 33669TC, the FONDECYT grant 11140566, the NWO grant 639.022.211 and the ERC consolidator grant 617951.

\bibliographystyle{spbasic}

\newcommand{\plot}[3]{$\beta=#2.#3$&\begin{minipage}{6cm}\includegraphics[width=\linewidth]{#1#2#3.pdf}\end{minipage}}

\begin{figure*}[p]
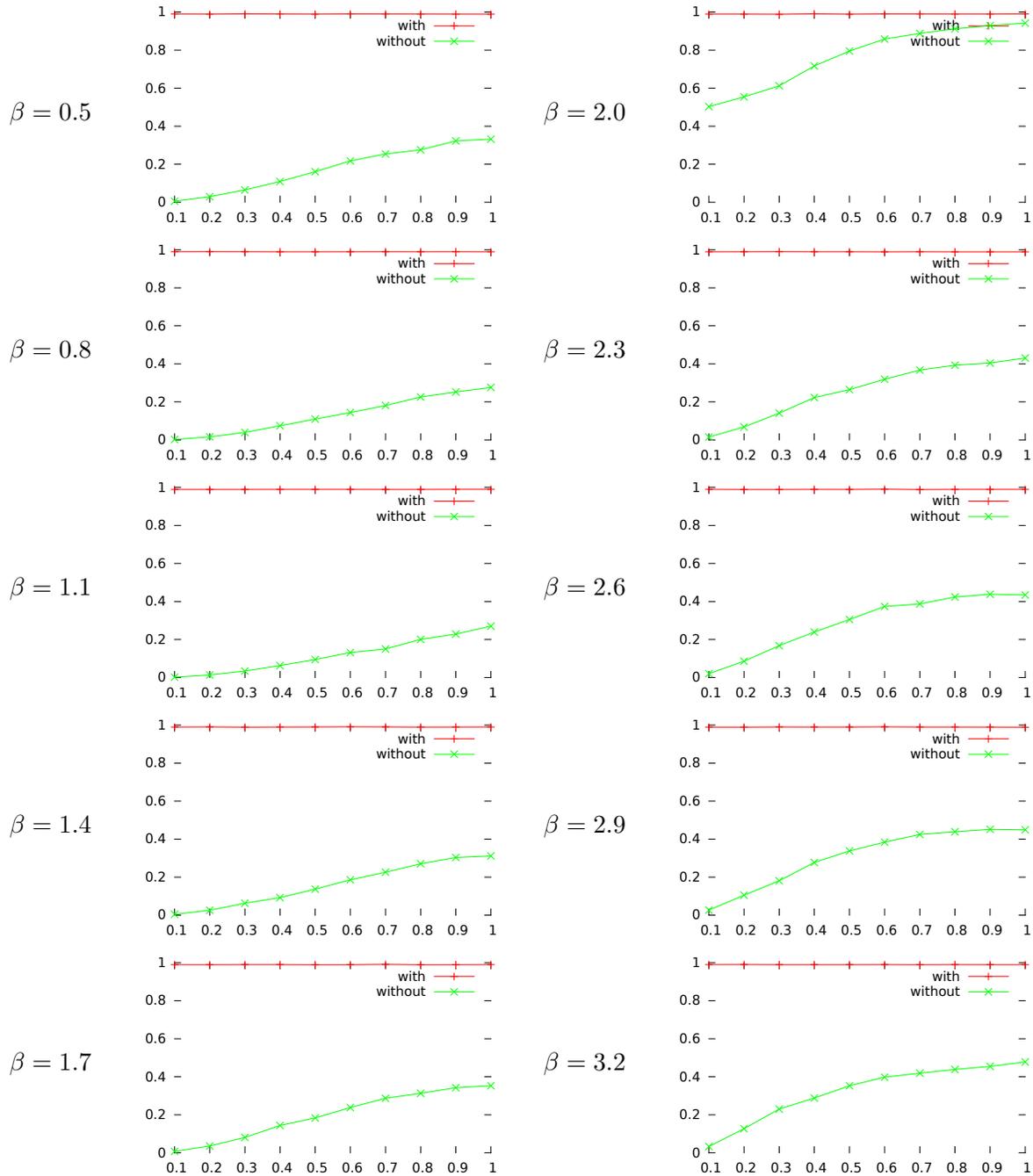

  \centering
  \begin{tabular}{llll}
    \plot{global60-}05 & \plot{global60-}20\\
    \plot{global60-}08 & \plot{global60-}23\\
    \plot{global60-}11 & \plot{global60-}26\\
    \plot{global60-}14 & \plot{global60-}29\\
    \plot{global60-}17 & \plot{global60-}32\\
  \end{tabular}
  \caption{Proportion of  job pairs (vertical axis) that satisfy a  global precedence
    relation as function of the  parameter $\sigma$ (horizontal axis). For every $(\beta,\sigma)$ pair, 25 instances, each contains 60 jobs, have been tested (Sections \ref{subsec:gener} and \ref{subsec:hardness}).}
  \label{fig:global}
\end{figure*}

\begin{figure*}[p]
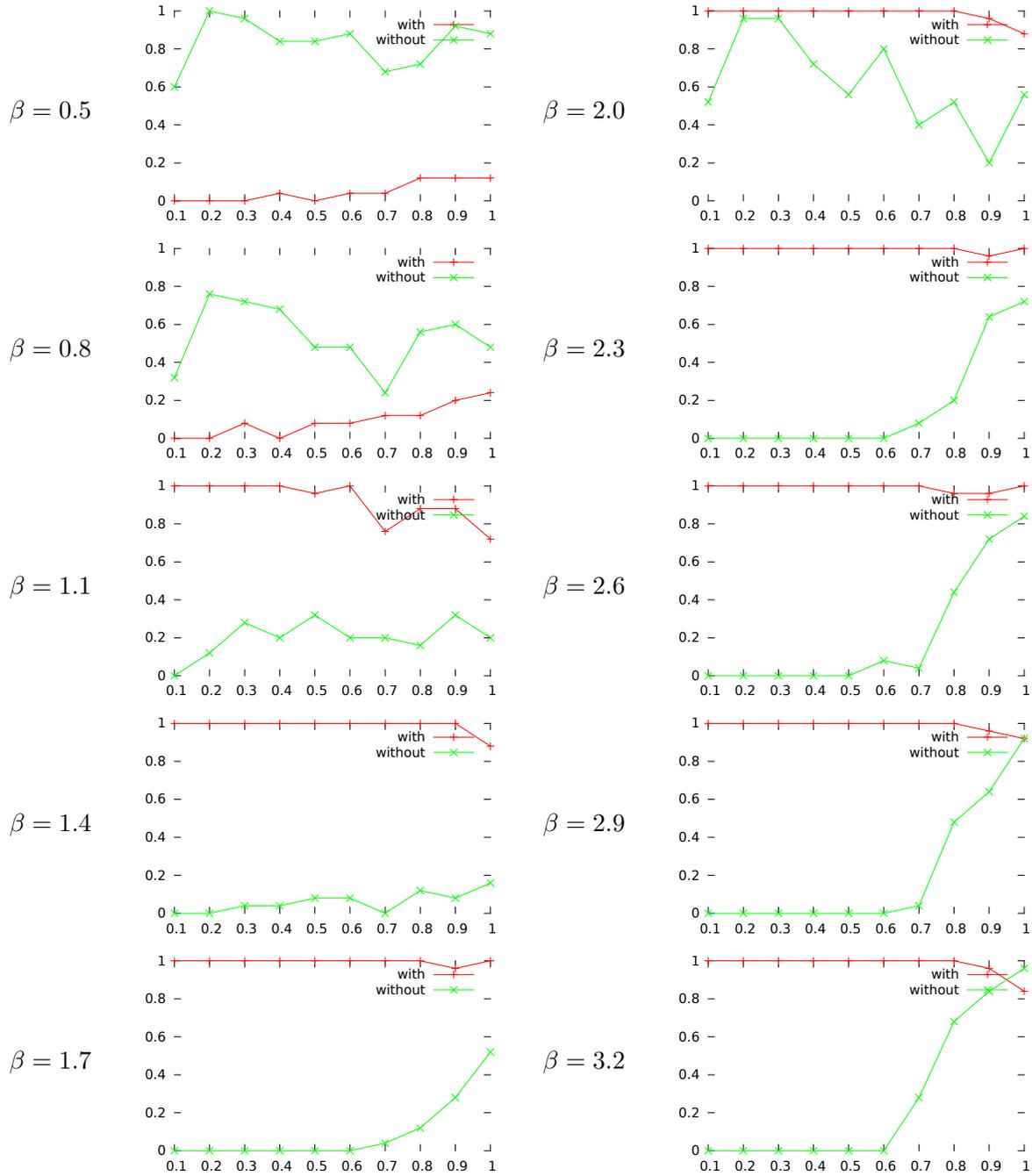

  \centering
  \begin{tabular}{llll}
    \plot{dir}05 & \plot{dir}20\\
    \plot{dir}08 & \plot{dir}23\\
    \plot{dir}11 & \plot{dir}26\\
    \plot{dir}14 & \plot{dir}29\\
    \plot{dir}17 & \plot{dir}32\\
  \end{tabular}
  \caption{Proportion  of  instances (vertical axis)  for  which the  forward  variant
    generated less  nodes than the  backward variant.  The  values are
    plotted as function of $\sigma$ (horizontal axis) with and without our
    new rules (Section \ref{subsec:forward-backward}). For every $(\beta,\sigma)$ pair,
    25 instances, each contains 20 jobs, have been tested.}
  \label{fig:direction}
\end{figure*}

\begin{figure*}[p]
  \centering
  \begin{tabular}{llll}
    \plot{ratio20-}05 & \plot{ratio20-}20\\
    \plot{ratio20-}08 & \plot{ratio20-}23\\
    \plot{ratio20-}11 & \plot{ratio20-}26\\
    \plot{ratio20-}14 & \plot{ratio20-}29\\
    \plot{ratio20-}17 & \plot{ratio20-}32\\
  \end{tabular}
  \caption{Average improvement factor (vertical axis) as
    function of $\beta$ and $\sigma$ (horizontal axis).
    For every $(\beta,\sigma)$ pair,
    25 instances, each contains 20 jobs, have been tested (Section \ref{subsec:improv}).}
  \label{fig:ratio}
\end{figure*}

\begin{figure*}[p]
  \centering
  \includegraphics[width=6cm]{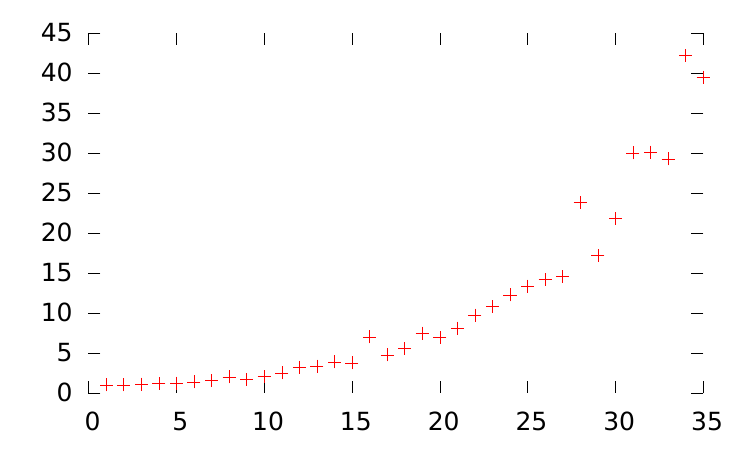}
  \includegraphics[width=6cm]{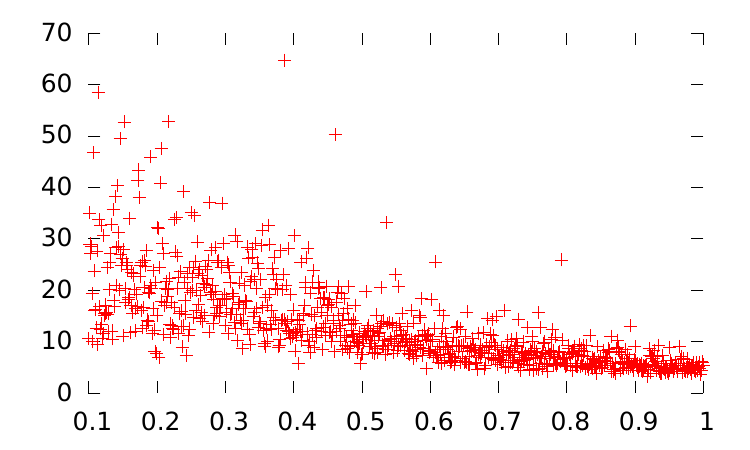}
  \caption{Improvement ratio for test sets \texttt{set-n} (left) and \texttt{set-T} (right) (Section \ref{sec:performance_beta_2}).}
  \label{fig:library}
\end{figure*}

\begin{figure*}[p]
  \centering
  \includegraphics[width=6cm]{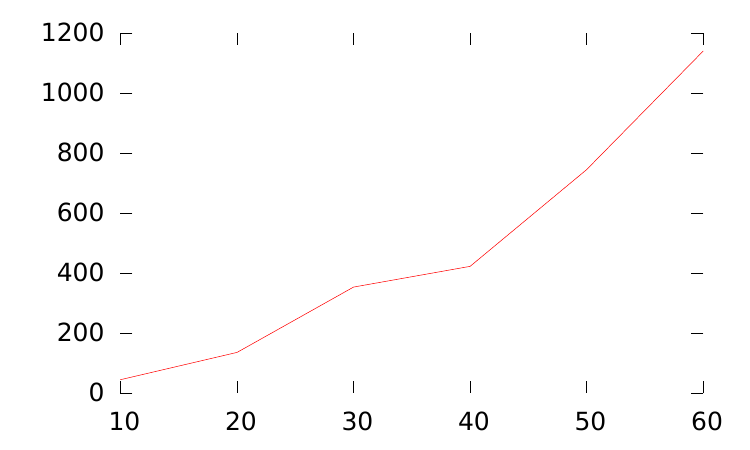}
  \includegraphics[width=6cm]{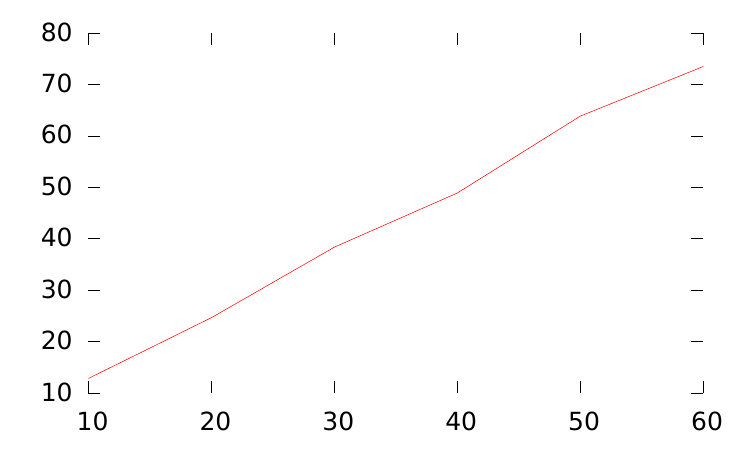}
  \caption{Average number of nodes (vertical axis) in  dependance on the size (horizontal axis)  of the
    instances,   generated   with $\beta = 2$ and  $\sigma=0.1$   on   the  left   and
    $\sigma=0.5$ on the right (Section \ref{sec:performance_inputsize}).}
  \label{fig:nodesByN2}
\end{figure*}

\end{document}